\documentclass[format=acmsmall]{acmart}
\usepackage{acm-ec-like}
\usepackage{booktabs} 
\usepackage[ruled]{algorithm2e} 

\SetAlFnt{\small}
\SetAlCapFnt{\small}
\SetAlCapNameFnt{\small}
\SetAlCapHSkip{0pt}
\IncMargin{-\parindent}

\setcitestyle{authoryear}


\usepackage{hyperref}
\usepackage{float}
\usepackage{amsfonts}
\usepackage{subfigure}
\usepackage{amsmath}
\usepackage{mathrsfs}
\usepackage{enumerate}
\usepackage{amsthm}
\usepackage{dsfont}
\usepackage{cleveref}
\usepackage{verbatim}
\usepackage{amssymb}

\usepackage{color}
\usepackage{xcolor}

\newcommand{\argmax}{\operatornamewithlimits{argmax}}
\newcommand\cdotfill{%
	\leavevmode\cleaders\hb@xt@.44em{\hss$\cdot$\hss}\hfill\kern\z@
}

\theoremstyle{remark}
\newtheorem*{Rem}{Remark}

\newcommand{\R}{\mathbb{R}}
\newcommand{\N}{\mathbb{N}}

\newcommand{\EE}{\mathbb{E}}


\title[Pricing under network effects]{Pricing under a multinomial logit model with  non linear network effects.}

\author{Felipe Maldonado}
\email{felipe.maldonado@in.tum.de}

\author{Gerardo Berbeglia}
\author{Pascal Van Hentenryck}

\begin{abstract}
We study the problem of pricing under a Multinomial Logit model where we incorporate network effects over the consumer's decisions. We analyse both cases, when sellers compete or collaborate. In particular, we pay special attention to the overall expected revenue and how the behaviour of the no purchase option is affected under variations of a network effect parameter. Where for example we prove that the market share for the no purchase option, is decreasing in terms of the value of the network effect, meaning that stronger communication among costumers increases the expected amount of sales.
We also analyse how the customer's utility is altered when network effects are incorporated into the market, comparing the cases where both competitive and monopolistic prices are displayed. We use tools from stochastic approximation algorithms to prove that the probability of purchasing the available products converges to a unique stationary distribution. We model that the sellers can  use this stationary distribution to establish their strategies. Finding that under those settings, a pure Nash Equilibrium represents the pricing strategies in the case of competition, and an optimal (that maximises the total revenue) fixed price characterise the case of collaboration.
\end{abstract}

\begin{document}

\maketitle

\section{Introduction}
The widespread use of internet has created many new type of markets that are reshaping the global economy, for example, people now watch movies on {\sc Netflix} instead of renting a DVD at {\sc Blockbuster}. These Internet-based markets do not necessary follow the same rules than traditional markets (which have been well studied for decades), since their structure can be fairly different, where for example products can have unlimited supply (e.g., digital goods like songs), and millions of users from all across the world can access to them instantaneously. All these new type of markets open research opportunities
in many disciplines such as Economics, Operations Research and Computer Science, where researchers could tackle problems like novel pricing schemes, subscription-based fees, recommendations systems and many more.

A very interesting feature of these markets is the effect of consumption history, reflected in a {\it social signal} (e.g., 5 stars rating, number of views, etc.), over the purchasing decisions of upcoming customers, phenomenon that in the literature is referred as {\bf network effects}. Consumers make their purchasing decisions (choose one product over the others, or do not purchase anything) not only based on the quality and prices of the available alternatives, but also based on market-specific features such as rating systems that keep track of past consumption and opinions. These network effects  become even more relevant when the prior information about the products is scarce, so the willingness to try/pay is heavily influenced by the opinion of the rest ({\it Wisdom-of-the-Crowd} e.g., \cite{wang2014quantifying}).

%

In this paper we aim to study seller's pricing strategies based on a model of consumer choice, where the purchasing decisions are affected by past consumption. In general terms we assume that the willingness to purchase is influenced by the (known) intrinsic utility of the products, their prices, and network effects as a function of consumption history. The consumers can purchase the product $i\in \{1,\dots, n\}$ that maximises their expected utility or they can choose to leave the market without making any purchase, what is called choosing the {\it no purchase option}. 

We also aim to represent how ineffective transactions affect the purchasing decisions, motivated for markets like {\tt eBay}, where after each transaction the users can give an evaluation to the seller in the categories of positive, negative and neutral. New consumers can observe how many transactions a seller has made and a reputation score that penalises the negative feedback, providing  extra information about transactions where the consumers were not satisfied (e.g, \cite{cabral2010dynamics}). In order to capture those ineffective transactions with our model, we consider that the no purchase option also presents network effects. In this way we are able to keep track of consumers that do not buy anything in the market, or equivalently they buy similar products somewhere else.

The main contributions of this paper can be summarised in the following way:

\begin{itemize}
	\item {\bf Non-linear network effects in a consumer choice model:} We propose a variation of the Multinomial Logit Model  for consumer choice where we incorporate non-linear  network effects, representing in this way, market interactions where consumers only see a score function based on past consumption.  Since the  probability of choosing the available products (or the no purchase option)  dynamically changes over time due to the network effects, we apply stochastic approximation techniques to prove that  such probability converges almost surely to an asymptotic stationary distribution, that represents the market share of each product in the long run.
	
	\item {\bf Monopolistic and competitive pricing are analysed:} For a market with $n$ sellers, we model their expected revenues based on the asymptotic market share distribution and the displayed prices. First, we study  the case where  sellers act collaboratively,  adopting a monopolistic pricing strategy to maximise the overall expected revenue. We show that the market share of the no purchase option is decreasing in terms of the network parameter $r: 0<r<1$, and that the overall expected revenue is increasing in that parameter as long as $r$ is large enough. 
	We then study the case where  sellers compete,  inducing a price competition game that has  a unique pure Nash Equilibrium (we also provide an algorithm to compute it). We finally compare experimentally and analytically both cases, incorporating the consumers' perspective into the analysis.
\end{itemize}

The rest of the paper is structured as follows: Section \ref{sec:RelLit} describes the related literature. Section \ref{sec:model} details the proposed consumer choice model. Section \ref{sec:Mono} focuses on the sellers acting collaboratively, in opposition  to Section \ref{sec:compe} where they compete. Finally, Section \ref{sec:Compare} compares the strategies defined in the previous two sections, showing how they affect/benefit the consumers.

Across the whole paper we include some numerical examples based on synthetic data, complementing the theoretical results and providing some extra insights. We also include an Appendix where more experiments are shown.


\subsection{Related Literature}
\label{sec:RelLit}

Capturing the way people make decisions has been a problem of interest across different disciplines for many decades, having on one hand classic models from Economic Theory, and on the other hand data-driven approaches from Machine Learning, two different perspectives that aim to the same: understand consumer behaviour, and eventually predict with certain accuracy future outcomes. Many features have been considered into these models (e.g., type of users, willingness to pay, etc.), trying to establish what is more relevant to the consumers, leading to better structured markets. 

Predicting the sales quantities is a key element in the field of Revenue Management, where sellers have to decide  what products to sell and their best prices that maximise their revenues (among other decisions). In order to do that, it is required to have at least an estimation of the consumers' demands for each one of the products. Such a problem has been widely studied in Economics, where classic models  assume that each user obtains certain utility for buying a particular product (given by a real number), so among all the available discrete options, consumers try to maximise their utilities. It is important to notice that this can be as general as possible, where for instance, among the available options we can consider bundles of products as a single one, or include the no-purchase option as a fictional product that captures the consumers that do not buy anything.  

An important subclass of discrete choice models are the Random Utility Models (RUM) (\cite{block1959random}). Among the most common Random Utility Models, we can find the  Multinomial Logit (MNL) model, originally introduced by \cite{luce1959individual}, widely used in fields such as Marketing, Economics, and Computer Science, where it is often used for operational and managerial decisions problems such as assortment optimisation (\cite{wang2016consumer}), pricing (\cite{besbes2016product}), scheduling (\cite{feldman2014appointment}).

The MNL model has many advantages due to the simplicity on how it is defined, leading to desirable results like being computational tractable (e.g., assortment optimisation can be computed in polynomial time \cite{talluri2004revenue,rusmevichientong2010dynamic}), however it exhibits the property known as independence of irrelevant alternatives (IIA), which states that the ratio of the probabilities of being chosen between two alternatives, is independent of the rest of alternatives. In practice, this property is often violated, particularly when there are more than two similar alternatives. To overcome this limitation, several extensions have been proposed, among them we can find the Nested Multinomial Logit (NMNL) model (\cite{williams1977formation}), where the alternatives  are grouped in nests, choosing each nest follows a MNL model, and choosing each alternative within each nest, is also chosen accordingly a MNL. Another extension is the Mixed Multinomial Logit (MMNL) model  (\cite{daly1978improved}) that considers random utilities and integrates the original MNL model over the distribution of utilities. Some of the downsides of these more general choice models is the computational complexity associated to them,  while problems like assortment (choosing the subset of products that maximise the expected revenue) under the MNL model admits a polynomial-time algorithm, in the case when consumers follow either a NMNL or MMNL model, the optimal assortment problem   is NP-hard (\cite{davis2014assortment, rusmevichientong2010assortment} respectively).


As \cite{berbeglia2018generalized} states, RUM's fail to explain several choice phenomena, such as the {\it decoy effect}, where the inclusion of a similar but inferior product  into the option set, can increase the probabilities of being chosen for some of the original products (a typical example is to include a medium size popcorn with a price close to the large size option). Hence, more complex consumer behaviour has led to the  inclusion of more general choice models that are not RUM's such as the Perception-Adjusted Luce model (PALM) (\cite{echenique2018perception}), the General Attraction Model (GAM) (\cite{gallego2014general}), the General Luce Model (\cite{echenique2015general}), and the  General Stochastic Preference \cite{berbeglia2018generalized}). PALM for example considers a perception effect, where the individuals check sequentialy their alternatives according a perception priority order, and the probability of choosing an alternative is affected by the probability of not choosing alternatives with higher priority.

A big part  of the problems studied with the use of discrete choice models are either assortment  or pricing problems (or a combination of both), where researchers weight the trade-off between having a more general model and its computational complexity. The different types of problem studied under these models has lead to many extensions, in \cite{besbes2016product} for example, the authors   present a model where the demands follow a MNL model, they analyse equilibrium outcomes when different firms compete and face a display constraint (assortment), each retailer needs to choose strategically which products to show and what prices in order to maximise their revenues. The analysis is separated in two parts, the first part is when the prices are fixed by an external agent and the firms only compete  in assortments, and the second is when they compete on both assortment and pricing, for the latter the prices are chosen according an assortment maximisation strategy.
\cite{li2011pricing} on the other hand,  study  a case where a Nested Logit model (including MNL as a special case, when there is only one nest) represents the demands of consumers, defining what the authors call the {\it market share } of the products, the authors find an optimal price, that maximises the revenue for a monopolist selling multiple products. A price and quantity competition are also studied under simpler conditions for the case of an oligopoly.

Some recent research have also incorporated the effect of past purchases (network effects) into a MNL model for consumer choice, for example in \cite{wang2016consumer} and \cite{du2016optimal}, the authors propose a model that focuses  in a monopolistic environment (studying assortment, and pricing optimisation respectively), defining a consumer utility function affected linearly by network effects. Their models has led to many related research and extensions (e.g., \cite{cui2016exact,chen2017duopoly}). 

Most of the previous research that include network effects into their models have focused on monopolistic markets, among the exceptions  we can find  \cite{li2011pricing} and \cite{chen2017duopoly}, where the latter analyses a duopoly in which the firms compete using the market share as a decision variable (instead of the price),  finding multiple Nash Equilibria depending on the strength of the network effects and the quality of the products. In contrast to them, as we will see in Section \ref{sec:compe}, when we study a competition between sellers,  our model leads to a unique Nash Equilibrium.

Also in the competition research literature, we can find a recent paper, \cite{feng2017blockbuster} where the authors provide  a game theoretical approach to a market where the strategic sellers decide to enter if their expected revenues are positive, their managerial decision is the investment in the quality of their products. After the quality game is played, sequential customers enters to the market and base their purchasing decisions on the qualities and the current sales volume.  Unlike the model presented in Section \ref{sec:model} , they do not consider a no purchase option, since they assume the prices are the same for every product and fixed beforehand.

In a different stream of literature  some researches have focussed on social networks and pricing decisions over the services provided (e.g.,  \cite{saaskilahti2015monopoly, chen2011optimal,candogan2010optimal,crapis2016monopoly} ), due to the nature of this type of network, most of the research in this area only analyses monopolistic pricing. However  some extra complexities have also been included into their models , such as incomplete information.  For example \cite{crapis2016monopoly} considers a model where the qualities of the products have a random distortion, and the preferences for each product follow a known distribution, the author study the monopolist's pricing problem where  sequential customers arrive and face the decision of buying or taking an outside option. Under some conditions based on social interactions, the products' qualities eventually can be learnt, and under this setting two pricing policies are proposed (static price, and single change price).

From a model perspective, the closest papers to our research are  \cite{du2016optimal} and  \cite{cui2016exact}, where the authors have  among their results, that for the homogeneous case (identical products), if the network effects are strong enough then the optimal price assigns the same price to all the products except for one (arbitrary) product, which gets a lower price. This result differs from the classical MNL model without network effects, where in such a case, all the products have the same price. In our model on the other hand, the presence of network effects does not affect that outcome, obtaining the same price for all products. That price depends on a network parameter $r$, and when $r\to 0$ we recover the prices from the MNL without network effects.  Another key difference is that unlike their market models (which are a static ), we present a dynamic model where customers arrive sequentially and observe a different network signal based on previous purchases, and the way the market shares are updated (stepsize) does not affect the long run behaviour. 

\section{Model}
\label{sec:model}
We consider a market with $n$  sellers,  $n\geq 2$, where each seller $i\in \{1,...,n\}$ owns one indivisible product with unlimited supply (e.g., digital goods like e-books).  For notational convenience we also call $i$  to the product of seller $i$.



Once the sellers have fixed the prices for their products, sequential consumers arrive and decide to buy one of the $n$ products or not to buy anything. We define a discrete time $k\geq 1$ as the arrival of consumer $k$ to the market. 

We assume that consumers' decisions are affected by the intrinsic utility of the products, the prices and some network effects related to the popularity of the products.
We model this as a variation of a standard MNL model with network effects (see for example \cite{du2016optimal,du2018optimal}), where we incorporate a non-linear network effect (in the consumers' utilities), reflecting a score function based on past purchases. This is done mainly for two reasons: first, we intend to use some results from prior related research (e.g., \cite{maldonado2018}), that gives us some assurances over the asymptotic behaviour of the market, and second, we aim to avoid multiplicity of price equilibria, a phenomenon that can be observed for example in   \cite{du2016optimal}. 

Formally our model is defined as follows:  the $k+1$-th consumer's utility obtained from purchasing product $i$ is given by
\begin{equation}
	\label{customerUt}
	u_i^{k}:=u_i(r,g_i, d_i^k, p_i)=g_i+r\ln(d_i^k)-\beta_ip_i+\xi_i,
\end{equation}

where $g_i$ represents the intrinsic utility of product $i$ (a measure of its quality); $r$ is a constant that represents the strength of the network effect on the consumers ($0<r<1$); $p_i$ is the price of the product $i$ and $\beta_i$ its price sensitivity; $d_i^k$ is its cumulative amount of purchases up to time $k$,  that for notational convenience we initialised  as $d_i^0=1$ for all $ i\in \{1,\dots, n\}$ (this is equivalent to consider $\ln(d_i^k+1)$, with $d_i^0=0$ ). Finally $\xi_i$ is a random variable representing consumer specific idiosyncrasies.

We also consider a {\it dummy} product, $n+1$, representing the no purchase option, which we characterise  with the parameters $ g_{n+1}=0$,  $p_{n+1}=0$, and $d_{n+1}^0=1$ which is increased by 1 every time a new consumer does not buy anything, keeping track of the ineffective exchanges between  sellers and consumers. The utility for the no purchase option is then $u_{n+1}^{k}=r\ln(d_{n+1}^k)+\xi_{n+1}$. 

In \cite{dhar1997consumer} the author shows several empirical studies  where consumers decide for a no purchase option, even when the available products have a good intrinsic utility. With our model we try to capture that type of phenomenon.  

We denote $ [n+1]:=\{1,...,n\}\cup\{n+1\}$, the set of products extended by the no purchase option. Let $\phi^k$ be the vector of market share at time $k$, given by $\phi^k_i=\frac{d_i^k}{\sum_{j\in  [n+1]}d_j^k}$, for all $i\in  [n+1]$. Under the assumption that $\{\xi_i\}_{i=1}^{n+1}$ are i.i.d random variables following a Gumbel distribution, and according to standard results for the Multinomial Logit model (see \cite{mcfadden1973conditional} for details), the probability that the $(k+1)$-th consumer purchases product $i$ is given by
\[\pi_i^k=\pi_i(\phi^k, p,q,\beta)=\frac{ (d_i^k)^re^{g_i-\beta_ip_i}}{\sum_{j\in [n+1]} (d_j^k)^re^{g_j-\beta_jp_j}}=\frac{ (\phi_i^k)^re^{g_i-\beta_ip_i}}{\sum_{j\in [n+1]} (\phi_j^k)^re^{g_i-\beta_jp_j}}.\] 
Where $\pi_{n+1}^k+\sum_{i=1}^n\pi_i^k=1$ for all $k\geq0$. We put $\pi^k=(\pi_1^k,\dots, \pi_n^k, \pi_{n+1}^k)$.

A key feature of this model, is that every new customer observes a different network signal in their utilities. This dynamic behaviour could imply that a noisy start could drive the market towards unpredictable outcomes. Figure \ref{fig:F0} depicts the evolution of a market with 4 products (and the no purchase option), where it can be observed that our market model allows to correct misalignments between initial perceptions and the true quality of the products (as long as $0<r<1$).

\begin{figure}[ht]
	\vspace{-3.5em}
	\begin{centering}
		\includegraphics[scale=0.25]{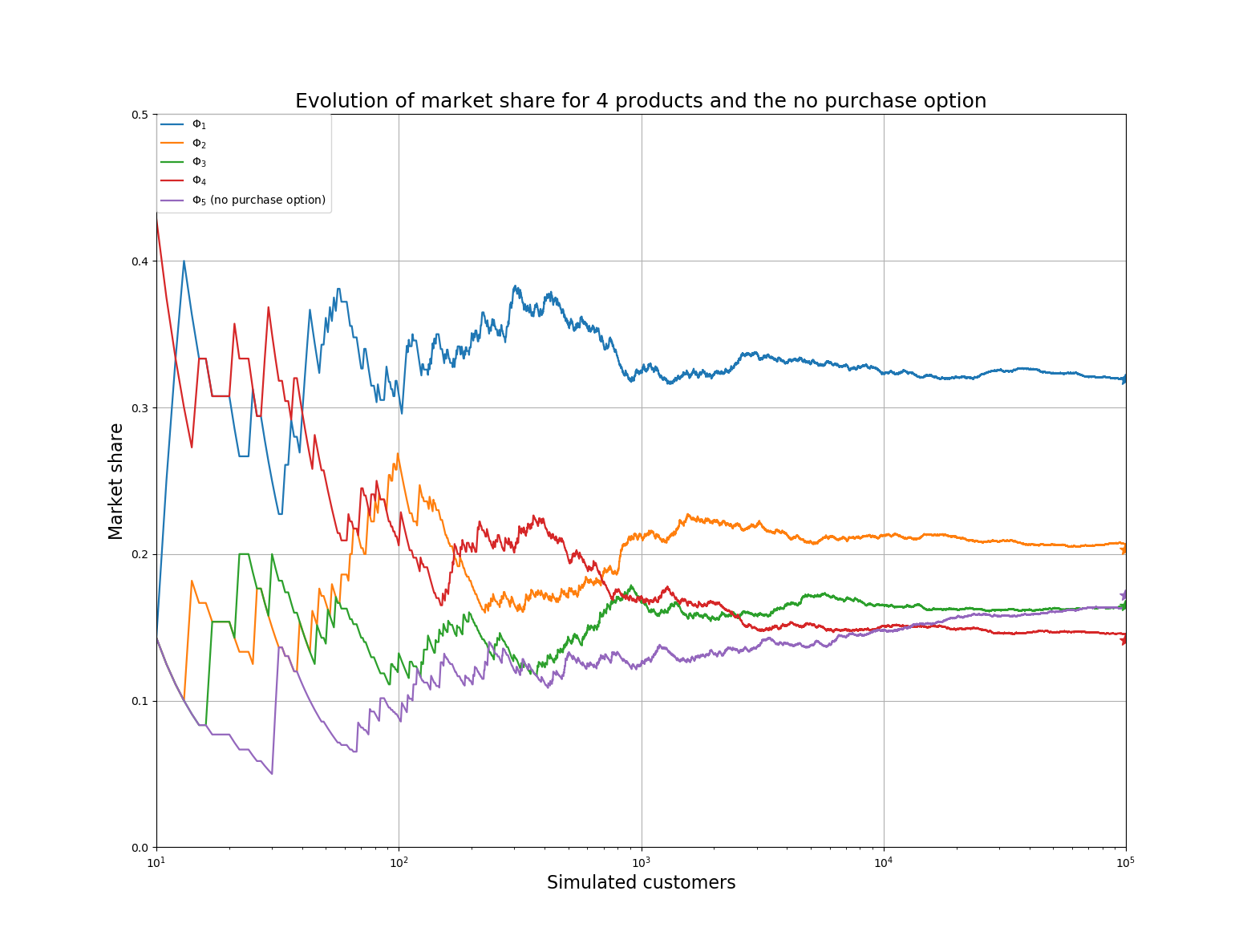}
	\end{centering}
	\centering{}
	\vspace{-4.5mm}
	\caption{\footnotesize Evolution of the market share with 4 products (and the no purchase option), using a network parameter $r=0.5$. The stars represent the respective theoretical convergence points (See Lemma  \ref{MSequilibrium})}
	\label{fig:F0}
	\vspace{-2.5mm}
\end{figure}

%
%
%


The following Lemma establishes an important property that the market share $\phi^k$ satisfies, this result will help to prove that the market eventually stabilise: 

\begin{lemma}
	\label{lem:rma}
	The market share $\phi^k$, satisfies the following recurrence:
	
	\begin{equation}
		\label{rma}\phi^{k+1}=\phi^k+\gamma^{k+1}\left[\pi^k-\phi^k+U^{k+1}\right],
	\end{equation}
	with $\gamma^{k+1}=\mathcal{O}(k^{-1})$ and $U^{k+1}$ a martingale difference noise term (i.e.,  $\EE[U^{k+1}|\phi^t,t\leq k]=0$).
	
\end{lemma}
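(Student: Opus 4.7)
The plan is to express $\phi^{k+1}$ in terms of $\phi^k$ by directly computing the change in the aggregated counts $d_i^k$, and then to isolate a deterministic drift piece together with a mean-zero noise piece.

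First I would introduce the indicator vector $X^{k+1}\in\{0,1\}^{n+1}$ whose $i$-th coordinate equals $1$ if consumer $k+1$ buys product $i$ (and where $X^{k+1}_{n+1}=1$ if the consumer takes the no-purchase option). By construction $\sum_{i\in[n+1]}X^{k+1}_i=1$, and by the MNL choice formula stated in the model section, $\mathbb{E}[X^{k+1}_i\mid\mathcal{F}_k]=\pi_i^k$, where $\mathcal{F}_k=\sigma(\phi^t,t\leq k)$. The counts update as $d_i^{k+1}=d_i^k+X^{k+1}_i$. Summing over $i$ and using the initial condition $d_i^0=1$ for all $i\in[n+1]$, the total number of observations satisfies $S^k:=\sum_{j\in[n+1]}d_j^k=(n+1)+k$, so $S^{k+1}=S^k+1$.

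Next I would compute $\phi^{k+1}_i-\phi^k_i$ algebraically. A direct calculation gives
\begin{equation*}
\phi_i^{k+1}-\phi_i^k=\frac{d_i^k+X_i^{k+1}}{S^k+1}-\frac{d_i^k}{S^k}=\frac{S^k X_i^{k+1}-d_i^k}{S^k(S^k+1)}=\frac{1}{S^k+1}\bigl(X_i^{k+1}-\phi_i^k\bigr).
\end{equation*}
Setting $\gamma^{k+1}:=1/(S^k+1)=1/(n+2+k)=\mathcal{O}(k^{-1})$, this becomes a vector identity $\phi^{k+1}-\phi^k=\gamma^{k+1}(X^{k+1}-\phi^k)$.

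Finally, I would decompose $X^{k+1}=\pi^k+U^{k+1}$ with $U^{k+1}:=X^{k+1}-\pi^k$, yielding the claimed recurrence $\phi^{k+1}=\phi^k+\gamma^{k+1}[\pi^k-\phi^k+U^{k+1}]$. The martingale-difference property follows immediately because $\pi^k$ is $\mathcal{F}_k$-measurable (it depends only on $\phi^k$ and the fixed parameters $g,p,\beta$), so $\mathbb{E}[U^{k+1}\mid\mathcal{F}_k]=\mathbb{E}[X^{k+1}\mid\mathcal{F}_k]-\pi^k=0$.

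There is really no hard step here; the argument is essentially bookkeeping for a generalised Pólya-type urn. The only point requiring care is the identification of $S^k$ (it is crucial that the no-purchase option is treated as one of the $n+1$ coordinates that absorb every arrival, so that $S^k$ increases deterministically by one at each step — without this the stepsize would itself be stochastic and the $\mathcal{O}(k^{-1})$ claim would require a separate concentration argument).
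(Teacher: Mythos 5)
Your proposal is correct and follows essentially the same route as the paper's own proof in Appendix C: both introduce the purchase-indicator vector, identify the deterministic stepsize $\gamma^{k+1}=\mathcal{O}(k^{-1})$ from the total count (which increases by exactly one per arrival because the no-purchase option is tracked), and split the indicator into its conditional mean $\pi^k$ plus a martingale-difference noise. Your bookkeeping of the total count as $(n+1)+k$ is in fact slightly more careful than the paper's, which writes it as $k$ despite the initialisation $d_i^0=1$, but this does not affect the $\mathcal{O}(k^{-1})$ conclusion.
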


The recurrence  given by Equation \eqref{rma} is known as a Robbins-Monro Algorithm, and under mild conditions over $\gamma^k, \phi^k,$ and $U^{k}$, the asymptotic behaviour of \eqref{rma} is closely related with the asymptotic behaviour of a continuous deterministic  dynamic given by
\begin{equation}
	\label{rmc}
	\dot \phi^t=\pi(\phi^t)-\phi^t, \quad \phi^t\in \Delta^{n+1}.
\end{equation}
\cite{Ljung77} introduced this idea, commonly called  {\it the ODE Method} for stochastic approximations, ever since it has been extensively studied (e.g., \cite{Duflo97,Kushner03}).
In a recent publication \cite{maldonado2018}  proved that a dynamic like \eqref{rmc} has only one equilibrium with all its coordinates positive, and that under some conditions over the parameters (that are satisfied in our setting), a related Robbins-Monro Algorithm converges almost surely to that equilibrium. Using that result we can  establish  Lemma \ref{MSequilibrium}.

\begin{lemma}
	\label{MSequilibrium}
	For any fixed price $\mathbf{p}=(p_1,...,p_n, p_{n+1})\in\R^n_+\times \{0\} $, fixed parameters $\beta_i, g_i$, and a network effect parameter $r: 0<r<1$, the market share $\phi^k$ converges almost surely to the unique equilibrium $\phi^*=(\phi_i^*)_{i\in  [n+1]}$ given by
	
	\begin{equation}
		\label{MSequil}\phi_i^*=\frac{( e^{g_i-\beta_ip_i})^{1/(1-r)}}{\sum_{j\in [n+1]}( e^{g_j-\beta_jp_j})^{1/(1-r)}}.
	\end{equation}
	
	Furthermore, for every $i\in [n+1]$, $\pi_i(\phi^*)=\phi_i^* $ (fixed point for the probability function $\pi$).
	
\end{lemma}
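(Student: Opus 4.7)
The plan is to combine Lemma \ref{lem:rma} with the standard ODE method for stochastic approximation, reducing the almost sure convergence of the discrete process $\phi^k$ to the analysis of the continuous dynamic \eqref{rmc} on the simplex $\Delta^{n+1}$. Concretely, Lemma \ref{lem:rma} already writes the recursion in Robbins--Monro form with step size $\gamma^{k+1}=\mathcal{O}(k^{-1})$ and a martingale-difference noise $U^{k+1}$ that is uniformly bounded since both $\phi^k$ and $\pi^k$ live in the simplex. These are precisely the hypotheses required by the convergence theorem of \cite{maldonado2018}, which asserts that for dynamics $\dot\phi=\pi(\phi)-\phi$ with $\pi$ of the logit-with-network-exponent form used here and $r\in(0,1)$, the ODE admits a unique equilibrium in the interior of the simplex, and the associated Robbins--Monro iterate converges to it almost surely.

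It therefore suffices to exhibit that equilibrium explicitly and verify uniqueness. An interior equilibrium $\phi^*\in\Delta^{n+1}$ of \eqref{rmc} must solve $\pi(\phi^*)=\phi^*$, i.e., for every $i\in[n+1]$,
\[
\phi_i^* \;=\; \frac{(\phi_i^*)^{r}\,e^{g_i-\beta_i p_i}}{S}, \qquad S:=\sum_{j\in[n+1]}(\phi_j^*)^{r}\,e^{g_j-\beta_j p_j}.
\]
Because $r<1$, the exponent $1-r$ is positive and we may isolate $\phi_i^*$, obtaining
\[
(\phi_i^*)^{1-r} \;=\; \frac{e^{g_i-\beta_i p_i}}{S}, \qquad\text{hence}\qquad \phi_i^* \;=\; \Bigl(\tfrac{e^{g_i-\beta_i p_i}}{S}\Bigr)^{1/(1-r)}.
\]
Imposing the normalisation $\sum_{i\in[n+1]}\phi_i^*=1$ pins down $S^{1/(1-r)}=\sum_{j\in[n+1]}(e^{g_j-\beta_j p_j})^{1/(1-r)}$, and substituting back gives exactly the closed form \eqref{MSequil}. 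The very same computation shows any interior fixed point of $\pi$ must take this form, which yields uniqueness in the relative interior of $\Delta^{n+1}$; together with the convergence result from \cite{maldonado2018}, boundary equilibria are ruled out as possible limits. The second claim $\pi_i(\phi^*)=\phi_i^*$ is then immediate, since $\phi^*$ was constructed as a fixed point of $\pi$.

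The main obstacle is not the algebra---which is routine once $r\in(0,1)$ makes the exponent $1/(1-r)$ well-defined---but rather checking that our model fits the framework of \cite{maldonado2018}. This requires verifying that $\pi$ is continuous on the simplex with no attracting equilibria on the boundary (which holds because each factor $e^{g_i-\beta_i p_i}$ is strictly positive and the map $\phi\mapsto \phi^r$ pushes trajectories away from the faces when $r<1$), and that the martingale noise $U^{k+1}$ meets the standard boundedness and square-summability assumptions, both of which follow directly from the compactness of $\Delta^{n+1}$ and the $\mathcal{O}(k^{-1})$ step size.
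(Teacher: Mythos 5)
Your proposal is correct and follows essentially the same route as the paper, which likewise reduces the statement to the Robbins--Monro form of Lemma \ref{lem:rma} and then invokes the convergence and uniqueness result of \cite{maldonado2018} for the ODE $\dot\phi=\pi(\phi)-\phi$ with $r\in(0,1)$. Your explicit fixed-point computation of \eqref{MSequil} (solving $(\phi_i^*)^{1-r}=e^{g_i-\beta_ip_i}/S$ and normalising) is correct and usefully spells out algebra the paper leaves implicit.
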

It is important to notice that according to some results from \cite{maldonado2018}, when $r>1$, $\phi^*$ is an unstable equilibrium, hence the market converges to other equilibria (for example, monopolies for some product) with probability 1.  Establishing pricing policies in those cases using the equilibrium $\phi^*$ as a decision variable does not make sense from a market model perspective (given the associated unpredictability). Therefore, we will focus  only on the cases where $0<r<1$ but allowing $d_i^K$ to grow freely (dynamic market). In some related research (e.g., \cite{cui2016exact} and \cite{wang2016consumer}) the authors do not consider upper bounds on the network parameters, but the market size is fixed (static market). As   \cite{du2016optimal}  point  out,  higher values of those parameters can  lead to suboptimal results (due to multiplicity of equilibria).

In the case that $0<r<1$ we notice that the term $\tau_i:=g_i-\beta_ip_i$ affects directly the expected market share for each product, in particular the product with the highest value of $\tau_i$ gets the largest market share. In this way, if a high intrinsic utility product is too expensive, then the chances of being purchased may decrease, or equivalently lower intrinsic utility products could increase their expected sales after a reduction on their prices. Keeping this into consideration, we define the expected revenue for each seller in terms of the expected market share and the chosen prices.

\begin{definition}
	The expected revenue for seller $i$ is given by
	
	\[w_i=w_i(r,q,p_i,p_{-i})=p_i\phi_i^*=p_i\frac{( e^{g_i-\beta_ip_i})^{1/(1-r)}}{\sum_{j\in [n+1]}( e^{g_j-\beta_jp_j})^{1/(1-r)}}.\]
	
	
\end{definition}

For notational convenience we assume without loss of generality that the intrinsic utilities are non-decreasingly ordered, this is, $g_1\geq g_2\geq \cdots\geq g_n > g_{n+1}=0$, meaning that seller $1$ has the highest intrinsic utility product. In the following two sections we will analyse two types of strategic decisions, that the sellers can follow based on their expected revenues $w_i$.  In Section  \ref{sec:Mono} we analyse the case of a coalition between the sellers where they adopt a monopolistic pricing strategy to maximise the overall expected revenue.  Whereas in Section \ref{sec:compe} we study the case where sellers compete on their prices to maximise their own expected revenues. We will pay special attention to the behaviour of the price and revenue in terms of the network parameter $r$, and when that is relevant we will make explicit the dependance (e.g., $p_i=p_i(r)$).



\section{Monopolistic pricing}
\label{sec:Mono}


We consider in this section a setting where the sellers decide to act  collaboratively. In this context the sellers choose their prices such that they maximise the overall expected revenue defined by
\[R(p)=\sum_{i=1}^nw_i=\sum_{i=1}^np_i\phi_i^*.\]
Thus, we are interested  on finding a price vector $\mathbf{p}^M:=(p_1^M,\dots, p_n^M)$, that we call {\it monopolistic price}, that satisfies

\[\mathbf{p}^M\in \argmax_{p\in \R^n_+}R(p).\]

In Theorem \ref{MonoPrice} we will deduce the conditions that $ \mathbf{p}^M$ must satisfy to maximise $R(p)$, and in the special case of having the same price sensitivities for all the products ( $\beta_i=\beta, \forall i\in\{1,\dots,n\}$), we will provide a closed expression for this price using the  Lambert  $W$ function (see \cite{corless1996lambertw}), where in particular,  for any nonnegative $x$, $W(x)$ is defined as the solution  of the equation 
\[We^W=x.\]
If $x>0$, then $W(x)$ is a positive continuous differentiable function, strictly increasing and concave. The use of the  Lambert  $W$ function spans a wide range of applications, and particularly it has been used  in Economics for pricing on discrete choice models (e.g., \cite{li2011pricing} and \cite{cui2016exact} ).

\begin{theorem}
	\label{MonoPrice}
	The monopolistic price, $\mathbf{p}^M = (p_1^M,p_2^M,\dots,p_n^M)$ that maximises $R(p)$ must satisfy that 
	\begin{equation}
		\label{GenMonPrice}
		\frac{p_i^M}{1-r}-\frac{1}{\beta_i}=\frac{p_k^M}{1-r}-\frac{1}{\beta_k}, \mbox{ for every pair } i,k\in \{1,\dots, n\}
	\end{equation}	
	Furthermore, if the products have the same price sensitivity $\beta_i=\beta, \forall i\in \{1,\dots, n\} $ then all the products have the same price $p^M_i=p^M$ given by 
	\begin{equation}
		\label{monoPrice}
		p^M=\frac{1-r}{\beta}\left[W\left(\frac{\sum_{i=1}^ne^{g_i/(1-r)}}{e}\right)+1\right],
	\end{equation}
	with $W()$ is the Lambert W function.
	
\end{theorem}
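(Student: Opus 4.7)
My plan is to derive the first-order conditions for the maximization of $R(p) = \sum_{i=1}^n p_i \phi_i^*$ and show they reduce to the form claimed in the theorem.

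First I would compute the partial derivatives of the market share. Writing $s = 1/(1-r)$, so $\phi_i^* = e^{s(g_i-\beta_i p_i)} / \sum_{j\in[n+1]} e^{s(g_j-\beta_j p_j)}$, a direct calculation (standard MNL-style) yields
\begin{align*}
\frac{\partial \phi_i^*}{\partial p_i} &= -s\beta_i \phi_i^*(1-\phi_i^*), & \frac{\partial \phi_j^*}{\partial p_i} &= s\beta_i \phi_i^* \phi_j^* \quad (j\neq i).
\end{align*}
Applying these to $R = \sum_{j=1}^n p_j \phi_j^*$ and recalling that $p_{n+1}=0$, the condition $\partial R/\partial p_i = 0$ becomes, after dividing by $\phi_i^* > 0$,
\[
1 - s\beta_i p_i (1-\phi_i^*) + s\beta_i \sum_{j\neq i, j\le n} p_j \phi_j^* = 0.
\]
The key algebraic step is to collapse the two sums by noticing that $s\beta_i p_i \phi_i^*$ cancels and produces $s\beta_i\sum_{j=1}^n p_j \phi_j^* = s\beta_i R$. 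Thus
\[
p_i = \frac{1-r}{\beta_i} + R,
\]
i.e.\ $p_i/(1-r) - 1/\beta_i = R/(1-r)$, a quantity independent of $i$. This gives \eqref{GenMonPrice} immediately. (Existence of an interior maximizer follows from the fact that $p_i\phi_i^*\to 0$ as $p_i\to\infty$ and $R=0$ on the boundary $p_i=0$, so a continuous supremum is attained in the interior; I would note this briefly.)

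For the homogeneous case $\beta_i = \beta$, the relation $p_i = (1-r)/\beta + R$ forces $p_i = p$ to be the same for all $i$. Substituting the common price into the expression for $R$ gives
\[
R = p\cdot \frac{A e^{-s\beta p}}{A e^{-s\beta p} + 1}, \qquad A := \sum_{i=1}^n e^{g_i/(1-r)},
\]
and combining with $p - R = 1/(s\beta) = (1-r)/\beta$ I would clear denominators to obtain the implicit equation
\[
s\beta p - 1 = A e^{-s\beta p}.
\]
Setting $y := s\beta p - 1$ transforms this into $y e^y = A/e$, whose unique nonnegative solution is $y = W(A/e)$. Unwinding yields $p = \frac{1-r}{\beta}\bigl[W(A/e) + 1\bigr]$, which is \eqref{monoPrice}.

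The main obstacle is really just the algebraic collapse $s\beta_i p_i \phi_i^* + s\beta_i \sum_{j\neq i}p_j\phi_j^* = s\beta_i R$ in the first-order condition; once that is in hand, everything else is either standard MNL differentiation or a Lambert-$W$ substitution. A secondary point worth verifying is uniqueness of the critical point, which follows because the system $p_i = (1-r)/\beta_i + R$ together with the strict monotonicity of $R$ in $p$ (through $\phi^*$) pins down a single interior solution; in the homogeneous case uniqueness is automatic because $W$ is single-valued on $[0,\infty)$.
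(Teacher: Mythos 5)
Your proposal is correct and follows essentially the same route as the paper: the same first-order conditions collapsing to $p_i = R + (1-r)/\beta_i$, and the same Lambert-$W$ substitution $y e^y = A/e$ in the homogeneous case (the paper phrases the intermediate step as $z\phi_{n+1}^* = 1$, which is algebraically identical to your $s\beta p - 1 = Ae^{-s\beta p}$). Your brief remarks on existence of an interior maximizer and uniqueness go slightly beyond what the paper records, but they do not change the argument.
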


Equation \eqref{monoPrice} is in agreement with the results from the classic Multinomial Logit model, where in the case of having the same price sensitivities, leads to the same price for every product. More comparisons can be found in Appendix \ref{sec:comparison}, where we also compare our results to the ones from \cite{du2016optimal}.

The proof for Theorem \ref{MonoPrice} is as follows.

\begin{proof}

	To find the prices that optimise $R(p)$ we compute the gradient of $R$, $\nabla R(p)$, with coordinates $\dfrac{\partial R(p)}{\partial p_k}$ given by
	\vspace{-3.5mm}
	\begin{align*}
		\dfrac{\partial R(p)}{\partial p_k}&=\sum_{i=1, i\neq k}^n\frac{\partial (p_i\phi_i^*)}{\partial p_k}+\frac{\partial (p_k\phi_k^*)}{\partial p_k}\\
		&=\sum_{i=1, i\neq k}^n \frac{\beta_k}{1-r}p_i\phi_i^*\phi_k^*+\phi_k^*-\frac{\beta_k}{1-r}\phi_k^*p_k(1-\phi_k^*)\\
		&=\phi_k^*\left[\frac{\beta_k}{1-r}\left(\sum_{i=1, i\neq k}^n p_i\phi_i^*+p_k\phi_k^*\right)+1-\frac{\beta_k}{1-r}p_k\right]\\
		&=\phi_k^*\left[\frac{\beta_k}{1-r}R(p)+1-\frac{\beta_k}{1-r}p_k\right]
	\end{align*}
	Imposing the first order conditions over $R(p)$ gives us
	
	\[\dfrac{\partial R(p)}{\partial p_k}=0\Leftrightarrow \phi_k^*=0 \vee \frac{R(p)}{1-r}=\frac{p_k}{1-r}-\frac{1}{\beta_k}, \qquad i\in \{1,\dots,n\}, \]
	However $\phi_k^*=0\Leftrightarrow p_k=\infty$, we conclude that for all pairs $i,k$, $ i,k\in \{1,\dots,n\}$, the following equality must hold
	\begin{equation*}
		\label{equiMono}
		\frac{p_k}{1-r}-\frac{1}{\beta_k}=\frac{p_i}{1-r}-\frac{1}{\beta_i},
	\end{equation*}
	which is the desired condition \eqref{GenMonPrice}. Now, defining $z_k=\dfrac{\beta_kp_k}{1-r}$ (that we will call the normalised price for product $k$), Equation \eqref{GenMonPrice} is equivalent to
	
	\begin{equation}
		\label{equiMono2}
		\frac{z_k-1}{\beta_k}=\frac{z_i-1}{\beta_i},\quad \forall i,k \in \in \{1,\dots,n\}.
	\end{equation}
	Equation \eqref{equiMono2}  defines a pairwise relation. On the other hand the prices must also satisfy
	\begin{equation}
		\label{equiMono3}
		\frac{R(p)}{1-r}=\frac{z_k-1}{\beta_k}
	\end{equation}
	Now in the special case when $\beta_i=\beta$ for all the products,  Equation \eqref{equiMono2} implies that all the prices are the same, $p_i=p$ for all $i\in \{1,\dots, n\}$. Replacing this condition into Equation  \eqref{equiMono3} produces the following equivalences
	\begin{align}
		\nonumber \frac{p}{1-r}\sum_{i=1}^n\phi_i^*&=\frac{z-1}{\beta}\\
		\label{priceandMS} z\phi_{n+1}^*&=1\\
		\nonumber\frac{z}{1+e^{-z}\sum_{i=1}^ne^{g_i/(1-r)}}&=1\\
		\label{monoSol} z-1&=e^{-z}\sum_{i=1}^ne^{g_i/(1-r)}\\
		\nonumber (z-1)e^{z-1}&=e^{-1}\sum_{i=1}^ne^{g_i/(1-r)}\\
		\label{normPrice}\Rightarrow z^M&=W(e^{-1}\sum_{i=1}^ne^{g_i/(1-r)})+1.
	\end{align}
	Finally replacing $p^M=\frac{(1-r)z^M}{\beta}$ we have our conclusion.
\end{proof}
We can easily notice that  each coordinate of $\mathbf{p}^M:=(p^M,\dots, p^M)$ is increasing on each value of $g_i$ for all $i\in \{1,\dots, n\}$, this is, higher the intrinsic utility, higher the price. Theorem \ref{MonopMonot} summarises other properties related to the monopolistic price, and the monotonic behaviour of the revenue in terms of the network effect parameter $r$. The proofs for part $(1)$ and $(2)$ are fairly straightforward, and they can be found in the Appendix \ref{sec:proofs}.


%
%
%


\begin{theorem}
	\label{MonopMonot}
	Let all the products have the same price sensitivity $\beta_i=\beta$, and consider a network effect parameter  $r$, $0<r<1$, then the following statements hold true:
	\begin{enumerate}
		\item The market share of the no purchase option, $\phi_{n+1}^*(\mathbf{p}^{M}(r))$, is strictly decreasing in $r$.
		\item The market share of the highest intrinsic utility product, $\phi_1^*(\mathbf{p}^{M}(r))$, is strictly increasing in $r$.
		\item There exists  $r^*, 0<r^*<1$ such that, the overall expected revenue $R(\mathbf{p}^{M}(r))=\sum_{i=1}^np^{M}_i(r)\phi_i^*(\mathbf{p}^{M}(r))$   is strictly  increasing  in $r$ for all $r: r^*\leq r<1$.
	\end{enumerate}
\end{theorem}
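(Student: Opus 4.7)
The plan hinges on the normalized price $z^M(r) = \beta p^M(r)/(1-r)$, which by equation~\eqref{normPrice} equals $W(A(r)) + 1$ with $A(r) = e^{-1}\sum_{i=1}^n e^{g_i/(1-r)}$, together with the identity $\phi_{n+1}^*(\mathbf{p}^M(r)) = 1/z^M(r)$ obtained from~\eqref{priceandMS}. Since each $g_i>0$, $A(r)$ is strictly increasing in $r$, and as $W$ is strictly increasing on $(0,\infty)$, so is $z^M(r)$; this yields part~(1) immediately. For part~(2), the MNL formula combined with $\phi_{n+1}^* = 1/z^M$ gives $\phi_1^* = e^{g_1/(1-r)-z^M}/z^M$, so after taking the logarithmic derivative the claim reduces to $g_1/(1-r)^2 > (1 + 1/z^M)(z^M)'(r)$. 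Implicit differentiation of~\eqref{monoSol} yields the clean identity $(z^M)'(r) = (1-r)^{-2}\sum_{i=1}^n g_i\phi_i^*$; combining this with $g_1 \geq g_i$ and $\sum_{i=1}^n \phi_i^* = (z^M-1)/z^M$ gives $(1-r)^2(d\ln\phi_1^*/dr) \geq g_1/(z^M)^2 > 0$.

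The main body of the argument is part~(3). Using $R(r) = p^M(r)(1-\phi_{n+1}^*(r)) = (1-r)(z^M(r)-1)/\beta$ together with the formula for $(z^M)'(r)$ above, one obtains
\begin{equation*}
\beta R'(r) = \frac{1}{1-r}\sum_{i=1}^n g_i \phi_i^* - (z^M-1).
\end{equation*}
Rewriting $z^M - 1 = z^M\sum_{i=1}^n \phi_i^*$ and using $\beta p^M = (1-r)z^M$, the sign of $R'(r)$ is governed by
\begin{equation*}
R'(r) > 0 \iff \bar{g}_\pi(r) := \frac{\sum_{i=1}^n g_i \phi_i^*}{\sum_{i=1}^n \phi_i^*} > \beta p^M(r),
\end{equation*}
so the task reduces to showing that the market-share-weighted average intrinsic utility among purchased products eventually exceeds the monopolistic price.

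The remaining obstacle, and the main technical step, is the asymptotic comparison as $r \to 1^-$. Both quantities converge to $g_1$, but at different rates. Using the expansion $W(x) = \ln x - \ln\ln x + o(1)$ as $x \to \infty$ on $z^M(r) = W(A(r))+1$, together with $\ln A(r) = -1 + g_1/(1-r) + O(1)$, one obtains
\begin{equation*}
\beta p^M(r) = (1-r)z^M(r) = g_1 - (1-r)\ln\!\bigl(g_1/(1-r)\bigr) + o(1-r),
\end{equation*}
which approaches $g_1$ \emph{from below} at rate $\Theta((1-r)\ln(1/(1-r)))$. On the other hand, since $\phi_i^*/\phi_1^* = e^{(g_i-g_1)/(1-r)}$ decays exponentially for every $i$ with $g_i < g_1$ and $\phi_{n+1}^*/\phi_1^* = O(1-r)$, the weight $\phi_1^*$ dominates the sums defining $\bar g_\pi$, and $g_1 - \bar g_\pi(r)$ either vanishes identically (when all $g_i$ coincide) or decays exponentially in $1/(1-r)$. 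Exponential decay dominates polynomial-logarithmic decay, so $\bar{g}_\pi(r) - \beta p^M(r) > 0$ in some left neighbourhood of $r = 1$, providing the desired $r^*$. Ties at the top of the intrinsic-utility ranking ($g_1 = g_2 = \cdots = g_k$) only affect the $O(1)$ term in $\ln A(r)$ and do not alter the asymptotic rates, so the argument is unaffected.
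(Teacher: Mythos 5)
Your proof is correct. Parts (1) and (2) follow essentially the paper's route: part (1) is identical (monotonicity of $z^M(r)=W(A(r))+1$ plus the identity $\phi_{n+1}^*=1/z^M$ from \eqref{priceandMS}), and part (2) rests on the same key identity $(z^M)'(r)=(1-r)^{-2}\sum_i g_i\phi_i^*$ obtained by implicit differentiation of \eqref{monoSol}, although your logarithmic-derivative bound $(1-r)^2\,d\ln\phi_1^*/dr\geq g_1/(z^M)^2>0$ is tidier than the paper's direct expansion of $\partial\phi_1^*/\partial r$ followed by a term-by-term sign check. The substantive difference is in part (3). Both arguments reach the same expression $\beta R'(r)=\frac{1}{1-r}\sum_i\phi_i^*\left(g_i-\beta p^M(r)\right)$, but the paper closes by assuming $p^M(r)$ is decreasing, splitting the index set into $N^{\pm}(r)$, asserting that the positive terms eventually dominate, and claiming that the assumed monotonicity of $p^M$ rules out further sign changes --- a qualitative argument that leaves both the nonemptiness of $N^{+}(r)$ near $r=1$ and the ``no further change of monotony'' step without real justification. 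You instead normalise the condition to $\bar g_\pi(r)>\beta p^M(r)$ and settle it by a rate comparison: the Lambert-$W$ asymptotics give $g_1-\beta p^M(r)=\Theta\left((1-r)\ln(1/(1-r))\right)>0$, while $g_1-\bar g_\pi(r)$ decays exponentially in $1/(1-r)$ because the top-utility products dominate the weights. This establishes $R'>0$ on an entire left neighbourhood of $1$ with no case analysis on the monotonicity of $p^M$, so your version is in fact more rigorous than the paper's precisely at the point where the paper's proof is loose; the price you pay is reliance on the expansion $W(x)=\ln x-\ln\ln x+o(1)$, which is standard and correctly applied (and your remark that ties $g_1=\dots=g_k$ only perturb the $O(1-r)$ term is accurate).
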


\begin{proof}
	\begin{itemize}
		\item[(3)]  We  first notice that if $p^M(r)$ is increasing in some interval $[r^*,r^{**})$, then the conclusion is direct, indeed, since $R(\mathbf{p}^M(r))=p^M(r)(1-\phi_{n+1}^*(\mathbf{p}^M(r)))$, taking the partial derivatives with respect to $r$ gives an expression that it is always positive for $r: r^*<r<1$. We assume then that $p^M(r)$ is decreasing for all $0<r<1$, in particular, we have that if for some $r_1: 0<r_1<1$, $g_i-\beta p^M(r_1)>0$, then for all  $r_2:$ $ r_1<r_2<1$, $g_i-\beta p^M(r_2)> 0$.
		
		Now, we know that the monopolistic price $p^M(r)$ is characterised by Equation \eqref{equiMono3} as follows:
		\[R(\mathbf{p}^M(r))=p^M_i(r)-\frac{1-r}{\beta_i}\]
		and in the special case where all $\beta_i$ are the same, we have
		\[R(\mathbf{p}^M(r))=\frac{1-r}{\beta}\left( \frac{\beta p^M(r)}{1-r}-1\right)=\frac{1-r}{\beta}(z^M(r)-1).\]
		
		Hence, taking the derivative of $R(\mathbf{p}^M(r))$ with respect to $r$, is the same as computing the following
		\begin{align*}
		\frac{\partial R(\mathbf{p}^M(r))}{\partial r}&= \frac{\partial \frac{1-r}{\beta}(z^M(r)-1)}{\partial r}=\frac{1}{\beta}[(1-r)\frac{\partial z^M(r)}{\partial r}-(z^M(r)-1)]
		\end{align*}
		But it can be shown (see Appendix \ref{sec:proofs}) that $\frac{\partial z^M(r)}{\partial r}=\dfrac{1}{(1-r)^2}\sum_{i=1}^n\phi_i^*(p^M(r))g_i$.
		
		Therefore, using again that $\frac{1-r}{\beta}(z^M(r)-1)=R(\mathbf{p}^M(r))=\sum_{i=1}^n\phi_i^*(\mathbf{p}^M(r))p^M(r)$, we have
		\begin{align*}
		\frac{\partial R(\mathbf{p}^M(r))}{\partial r}&=\frac{1}{\beta(1-r)}\left[\sum_{i=1}^n\phi_i^*(\mathbf{p}^M(r))g_i-\sum_{i=1}^n\phi_i^*(\mathbf{p}^M(r))\beta p^M(r)\right]\\
		&=\frac{1}{\beta}\left[\sum_{i=1}^n\phi_i^*(\mathbf{p}^M(r))\left(\frac{g_i-\beta p^M(r)}{1-r}\right)\right]
		\end{align*}
		where $\phi_i^*(\mathbf{p}^M(r))=\dfrac{e^{\frac{g_i-\beta p^M(r)}{1-r}}}{1+\sum_{j=1}^ne^{\frac{g_j-\beta p^M(r)}{1-r}}}$.
		
		For a fixed $r$ we define the following sets:
		\begin{align*}
		N^-(r)&=\{i\in \{1,...,n\}:g_i-\beta p^M(r)\leq 0 \}\\
		N^+(r)&=\{i\in \{1,...,n\}:g_i-\beta p^M(r)> 0 \}
		\end{align*}
		Then we find the following equality  \[\sum_{i=1}^n\phi_i^*(\mathbf{p}^M(r))\left[\frac{g_i-\beta p^M(r)}{1-r}\right]= \sum_{i\in N^-(r)}\frac{e^{\frac{g_i-\beta p^M(r)}{1-r}}\frac{g_i-\beta p^M(r)}{1-r}}{1+\sum_{j=1}^ne^{\frac{g_j-\beta p^M(r)}{1-r}}}+\sum_{i\in N^+(r)}\frac{e^{\frac{g_i-\beta p^M(r)}{1-r}}\frac{g_i-\beta p^M(r)}{1-r}}{1+\sum_{j=1}^ne^{\frac{g_j-\beta p^M(r)}{1-r}}}\]
		
		
		We notice that if $r$ is close enough to 1, then for all $i\in N^-(r)$, $e^{\frac{g_i-\beta p}{1-r}}\frac{g_i-\beta p^M(r)}{1-r}$  is a small negative number, on the other hand for $i\in N^+(r)$, $e^{\frac{g_i-\beta p^M(r)}{1-r}}\frac{g_i-\beta p^M(r)}{1-r}$ is positive and can be arbitrarily large when $r\sim 1$. Necessarily there must exists $r^*$ such that  
		\[\frac{1}{\beta}\left[\sum_{i=1}^n\phi_i^*(\mathbf{p}^M(r^*))\left(\frac{g_i-\beta p^M(r^*)}{1-r^*}\right)\right]>0\]
		
		and as $p^M(r)$ is assumed to be decreasing, we can ensure that there will not be another change of monotony. In conclusion, $R(\mathbf{p}^M(r))$ is a strictly increasing function  when $r^*<r<1$.
	\end{itemize}
\end{proof}

%

The following example shows a small instance where we can see how the prices, market share and revenue are affected under different values of $r$.

\begin{example}
	Consider network parameters  $r\in (0,1)$,  a price sensitivity $\beta_i=\beta=0.1$  and intrinsic utilities given by $(g_1,g_2,g_3,g_4,g_5) = (0.9874,  0.6454,  0.4053, 0.2891, 0.03353 )$. Figure \ref{fig:revemon} depicts the values of monopolistic price ($p^M$),  the market share of the no purchase option  ($\phi_{n+1}^*(\mathbf{p}^M)$) and the highest intrinsic utility product ($\phi_1^*(\mathbf{p}^M)$), and finally the  overall revenue ($R(\mathbf{p}^M)$) as a function of $r$. Figure \ref{fig:revesquare} shows the different expected revenues (the area of the rectangles) for each value of $r$, the total demand is defined as the sum of the expected market shares (not including the no purchase option), and the optimal prices are obtained using Theorem \ref{MonoPrice}. As it can be observed, for lower values  of $r$ the prices are higher but the total demands are lower, the opposite effect is observed when $r$ is close to 1. Figures \ref{fig:F3} and \ref{fig:F4} in  Appendix \ref{sec:exper} also complement these observations.


	\begin{figure}[h]
		\vspace{-1.2em}
		\hspace{-2em}\includegraphics[scale=0.27]{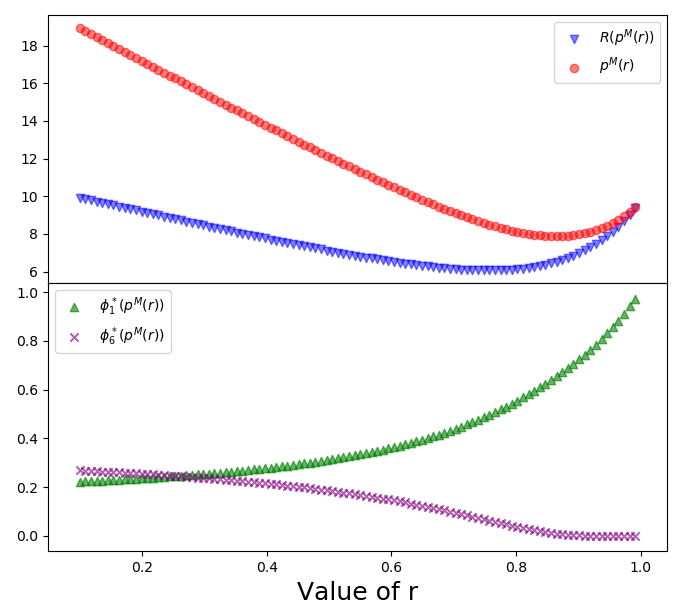}
		\centering{}
		\vspace{-3.5mm}
		\caption{\footnotesize In the top figure, $R(\mathbf{p}^M(r))$ and $p^M(r)$ (blue and red respectively) are displayed for different values of the parameter $r: 0<r<1$. In the bottom figure, the market shares of the highest intrinsic utility product and the no purchase option are displayed (green and purple respectively).}
		\label{fig:revemon}
		\vspace{-3.5mm}
	\end{figure}

	\begin{figure}[h]
		\hspace{-2em}\includegraphics[scale=0.19]{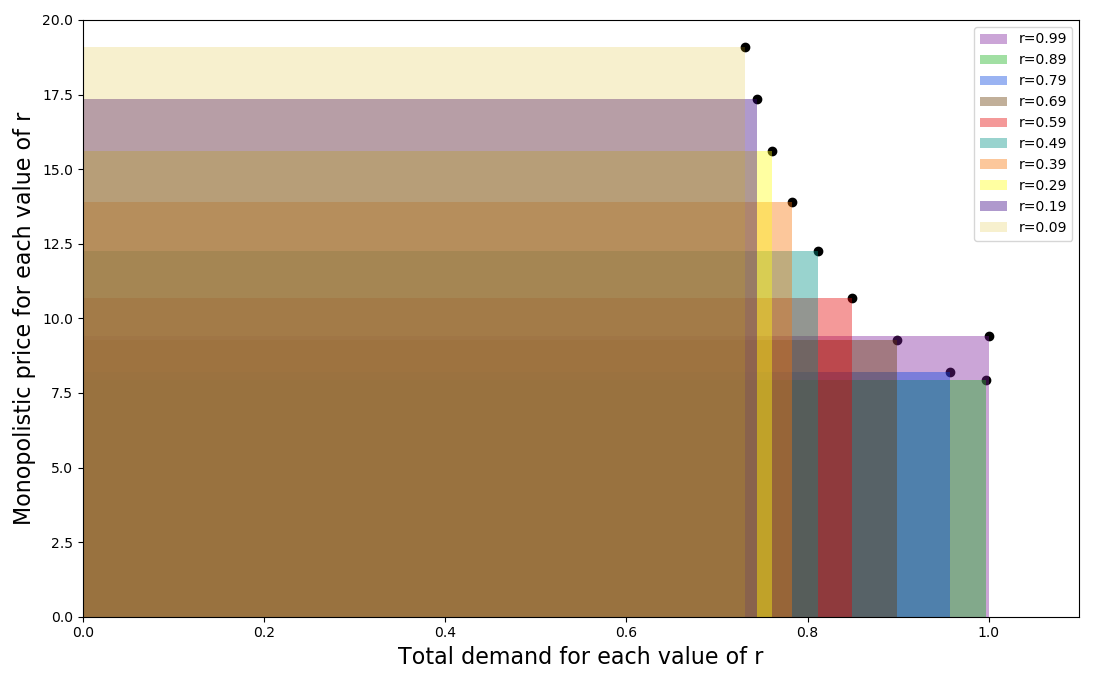}
		\centering{}
		\caption{\footnotesize In the figure, the X axis represents the total demand (scaled up to 1) for the available products, while Y axis contains the prices. The area of each rectangle corresponds to the expected revenue for each value of $r$.}
		\label{fig:revesquare}
	\end{figure}

\end{example}

In Section \ref{sec:compe} we will study the case where the strategic sellers decide to compete to maximise their individual revenues, inducing a {\it price competition game}.

\section{Price competition}
\label{sec:compe}

We consider a complete information  price competition game  $\mathcal{G}=(\{1,\dots, n\},\mathbf{w},S)$, where each player (seller) $i\in\{1,\dots, n\}$ chooses as a strategy a price $p_i$ for his product, in  a common strategy space $S_i=[0,\infty)$. Let $S:=\prod_{i=1}^nS_i=[0,\infty)^n$, and each element $\mathbf{p}\in S$ will be called a strategy profile.

The payoff received by player $i$ after the strategy profile $\mathbf{p}=(p_i, p_{-i})\in S$ is played, is given by $w_i(p)=p_i\phi_i^*(p)$, where $p_{-i}=(p_1,\dots, p_{i-1}, p_{i+1},\dots, p_n)$. We define the joint payoff as $\mathbf{w}=(w_1,\dots, w_n)$. Each player chooses the best response to the other sellers' strategies to maximise their payoff,   hence our objective is to find a maximiser for $\mathbf{w}$. We consider the important notion of Nash Equilibrium in the following definition.

\begin{definition}
	A strategy profile $\mathbf{p}^*=(p_i^*,\dots, p_n^*) \in S$ is a pure Nash Equilibrium (NE) for the game $\mathcal{G}$ if for each player $i$
	\[w_i(p_i^*,p_{-i}^*)\geq w_i(p_i, p_{-i}^*), \forall p_i\in S_i.\]
\end{definition}

The following Theorem shows that there exists a unique pure NE for the game $\mathcal{G}$, given in an implicit form using the Lambert $W$ function.

\begin{theorem}
	\label{NE}
	The price competition game $\mathcal{G}$  has a unique (pure) Nash Equilibrium, $\mathbf{p}^C=(p_1^C,\dots,p_n^C)\in [0,\infty)^n$, with 
	\begin{equation}
		\label{CompPrice}p_i^C=\frac{1-r}{\beta_i}\left[W(\frac{ e^{g_i/(1-r)}}{e+\sum_{j=1, j\neq i}^n e^{\frac{g_j-\beta_jp^C_j}{1-r}+1}})+1\right],\quad \forall i\in\{1,\dots, n\}.
	\end{equation}
	We call $\mathbf{p}^C$,  the {\it competitive price}.
\end{theorem}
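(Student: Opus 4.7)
The plan is to derive each seller's best response by first-order conditions, solve it with the Lambert $W$ function to obtain exactly \eqref{CompPrice}, and then argue that the coupled system of implicit equations admits a unique solution.

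First I would fix $p_{-i}$ and write $A_j = e^{(g_j - \beta_j p_j)/(1-r)}$ for $j \leq n$, $A_{n+1} = 1$, and $B_i = \sum_{j \neq i} A_j$, so that $\phi_i^* = A_i/(A_i + B_i)$. Setting $z_i = \beta_i p_i/(1-r)$ and differentiating $w_i = p_i \phi_i^*$ with respect to $p_i$ gives
\[\frac{\partial w_i}{\partial p_i} = \phi_i^*\bigl[1 - z_i(1 - \phi_i^*)\bigr].\]
The bracket is positive at $p_i = 0$, negative for large $p_i$, and strictly decreasing in $p_i$ (since $\phi_i^*$ decreases while $z_i$ grows), so $w_i(\cdot, p_{-i})$ is strictly unimodal and its unique maximiser satisfies $(z_i - 1) B_i = A_i$. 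Substituting $A_i = e^{g_i/(1-r)} e^{-z_i}$ rearranges this to $(z_i - 1) e^{z_i - 1} = e^{g_i/(1-r)} / (e B_i)$; applying $W$ and multiplying by $(1-r)/\beta_i$ yields exactly \eqref{CompPrice} as the unique best response.

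For existence, because $W$ is increasing and every $B_i \geq 1$, each $BR_i(p_{-i})$ is uniformly bounded by some constant $P$, so the continuous best-response map is a self-map of the compact convex box $[0,P]^n$ and Brouwer delivers at least one equilibrium. The main obstacle is uniqueness. The approach is to exploit the aggregative structure: since $BR_i$ depends on $p_{-i}$ only through the scalar $B_i$, setting $B = 1 + \sum_{j=1}^n A_j$ and using $B_i = B - A_i$, the equilibrium relation implicitly defines $A_i$ as a function $\tilde h_i(B)$ of the single aggregate $B$, and every Nash equilibrium is a root of the scalar map $\Phi(B) = 1 + \sum_{i=1}^n \tilde h_i(B) - B$. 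A direct computation using $(z_i-1)B_i = A_i$ gives $h_i'(B_i) = (\phi_i^*)^2/(1-\phi_i^*)$, and implicit differentiation of $A_i = h_i(B - A_i)$ then yields $\tilde h_i'(B) = (\phi_i^*)^2/(1 - \phi_i^* + (\phi_i^*)^2) < \phi_i^*$ (the last inequality reducing to $\phi_i^* < 1$, which holds because the outside option forces $\phi_{n+1}^* > 0$). Summing, $\sum_i \tilde h_i'(B) < \sum_i \phi_i^* < 1$, so $\Phi$ is strictly decreasing and has at most one root. This unique $B$ determines each $A_i$, each $z_i$, and hence the unique competitive price vector $\mathbf{p}^C$.
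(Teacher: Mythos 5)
Your proposal is correct, and while the best-response derivation coincides with the paper's (first-order conditions give $\beta_i p_i=(1-r)/(1-\phi_i^*)$, which after normalising to $z_i$ becomes $(z_i-1)e^{z_i-1}=c_i/(eB_i)$ and is inverted with the Lambert $W$ function to yield \eqref{CompPrice}), your treatment of optimality and uniqueness is genuinely different. For optimality, the paper evaluates the second derivative at the critical point, whereas you show the bracket $1-z_i(1-\phi_i^*)$ is strictly decreasing in $p_i$, which gives strict unimodality and hence global optimality of the unique critical point in one step; both work, yours is a little cleaner. The more substantive divergence is uniqueness of the equilibrium: the paper argues only that, for fixed $z_{-i}$, the left side of \eqref{sysEq} increases in $z_i$ while the right side decreases, which establishes uniqueness of each player's best response but does not by itself exclude multiple solutions of the coupled $n$-dimensional system. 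Your aggregative argument addresses exactly that: the computations check out ($W'(u)=W(u)/(u(1+W(u)))$ gives $h_i'(B_i)=(\phi_i^*)^2/(1-\phi_i^*)$, implicit differentiation gives $\tilde h_i'(B)=(\phi_i^*)^2/(1-\phi_i^*+(\phi_i^*)^2)<\phi_i^*$ since $(1-\phi_i^*)^2>0$, and $\sum_i\phi_i^*=1-\phi_{n+1}^*<1$ forces $\Phi'<0$), and combined with your explicit Brouwer existence step it yields a complete and arguably tighter proof than the one in the paper. The only point worth adding is the routine verification that $\tilde h_i$ is single-valued and differentiable on the whole interval between any two candidate aggregates, which follows because $A_i\mapsto A_i-h_i(B-A_i)$ has derivative $1+h_i'>0$ and $h_i$ is bounded above by $c_i/e$; with that noted, the mean-value argument on $\Phi$ goes through.
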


\begin{proof}
	We will proceed as follows: first we will show the conditions that the strategy profiles must satisfy in order to be critical points for the vector field $\mathbf{w}=(w_1,\dots, w_n)$; second we will prove that these conditions are also sufficient, meaning that they describe the best response for each player; third we will show that the system of equations that define the best responses has a unique solution; and finally we will conclude.
	
	Indeed, let us consider a vector $\mathbf{p}\in (0,\infty)^n$ and take the first order derivative of $w_i=p_i\phi_i^*(\mathbf{p})$ with respect to $p_i$ for all $i\in  [n+1]$, $i\neq n+1$ (where we are assuming a fixed intrinsic utility vector $\mathbf{g}=(g_1,\dots,g_n)$ and parameters $\beta_i, r$), this is,
	\begin{align*}
		\frac{\partial w_i}{\partial p_i}&=\phi_i^*+p_i\frac{\partial \phi_i^*}{\partial p_i}\\
		&=\phi_i^*+\frac{\beta_i p_i}{1-r}\left[(\phi_i^*)^2-\phi_i^*\right]\\
		&=\phi_i^*\left[1-\frac{\beta_i p_i}{1-r}(1-\phi_i^*)\right]
	\end{align*}
Then $\frac{\partial w_i}{\partial p_i}=0\Leftrightarrow p_i=\frac{1-r}{\beta_i(1-\phi_i^*)}\vee \phi_i^*=0$ . Notice that  $\phi_i^*=0 \Leftrightarrow p_i=\infty$.
The system of equations that define the possible equilibria are given by the conditions
	\begin{equation}
		\label{priceEq}
		\beta_ip_i=\frac{1-r}{1-\phi_i^*} \quad \mbox{ for all } i\in\{1,\dots, n\}.
	\end{equation}
	Calling $z_i:=\frac{\beta_ip_i}{1-r}$, the normalised price, $c_i:=e^{g_i/(1-r)}$  and $M(z):=\sum_{j\in  [n+1]}c_je^{-z_j}$ (with $z_{n+1}=0, c_{n+1}=1$), we  notice that $M(z)$ has the same value for all sellers $i\in  [n+1]$, so in this context can be treated as a constant (for every set of values of prices, $M(z)$ has a fixed value). Equation \eqref{priceEq} can be rewritten as follows: 
	\begin{align}
		\label{sysEq} z_i&=\frac{M(z)}{\sum_{j=0, j\neq i}^nc_je^{-z_j}}, \quad \mbox{ for all } i\in\{1,\dots, n\}\\
		\nonumber\Leftrightarrow (z_i-1)e^{z_i}&=\frac{c_i}{\sum_{j=0, j\neq i}^nc_je^{-z_j}}\\
		\nonumber\Leftrightarrow (z_i-1)e^{z_i-1}&=\frac{c_i}{\sum_{j=0, j\neq i}^nc_je^{-z_j+1}}\\
		\label{Lambert}\Rightarrow z_i-1&=W(\frac{c_i}{\sum_{j=0, j\neq i}^nc_je^{-z_j+1}}), \quad \mbox{ for all } i\in\{1,\dots, n\}
	\end{align}
	$W(\cdot)$ here is the Lambert W function. We have obtained a set of conditions that the critical points of $\mathbf{w}$ must satisfy. Moreover, $z_i$ satisfying Equation \eqref{priceEq} maximises the profit $w_i(z)$, indeed, we consider the second order condition for each function $w_i$ 
	\begin{equation}
		\label{secOrder}\frac{\partial^2 w_i(z)}{\partial p_i^2}=\frac{\partial}{\partial p_i}(\phi_i^*\left[1-\frac{\beta_i p_i}{1-r}(1-\phi_i^*)\right])=\phi_i^*(\phi_i^*-1)\left[z_i-z_i^2+2z_i^2\phi_i^*+\frac{\beta_i}{1-r}\right]
	\end{equation}
	and if $z_i$ satisfies Equation \eqref{priceEq}, then $\phi_i^*=\frac{z_i-1}{z_i}$, and replacing this into \eqref{secOrder} we have 
	\[\frac{\partial^2 w_i(z)}{\partial p_i^2}=-\frac{z_i-1}{z_i^2}\left[z_i^2-z_i+\frac{\beta_i}{1-r}\right]<0,\quad \forall i\in \{1,\dots, n\}.\]
	Then if   $\mathbf{p}^*=(p_1^*,\dots, p_n^*)$ is given by Equation \eqref{priceEq} necessarily, $w(\mathbf{p}^*)\geq w(p_i, p_{-i}^*)$ for all $p_i\in S_i$ for all $i\in N$, this is, $p^*$ is a pure NE. 
	Now, we claim that there is only one solution to the system of equations \eqref{sysEq} (for each set of parameters $\mathbf{g},\boldsymbol{\beta}, r$), defining a unique Nash Equilibrium for the price competition game $\mathcal{G}$. 
	
	Clearly the left hand side of \eqref{sysEq} is increasing in $z_i$, and the right hand side of \eqref{sysEq}, $y_i(z):=\frac{M(z)}{\sum_{j=0, j\neq i}^nc_je^{-z_j}}\in[1,\infty)$ is decreasing for every $z_i$, $0<i\leq n$. Indeed, the denominator of $y_i(z)$ is constant in terms of $z_i$, and the numerator $M(z)=\sum_{j\in  [n+1]}c_je^{-z_j}$ is decreasing in $z_i$  hence there exists a unique intersection of both curves, defining a vector solution $\mathbf{z}^*=(z^*_1,..., z^*_n,z^*_{n+1})=(z^*_1,...,z^*_n,0 )\in (1,\infty)^{n}\times\{0\}$. Finally using that  $z_i^*=\frac{\beta_ip_i^*}{1-r}$ into Equation \eqref{Lambert}, we find that the unique NE, $ \mathbf{p}^C=(p_1^C,\dots, p_n^C)$ is given by 
	
	\begin{equation*}
		p_i^C=\frac{1-r}{\beta_i}\left[W(\frac{ e^{g_i/(1-r)}}{e+\sum_{j=1, j\neq i}^ne^{\frac{g_i-\beta_jp^C_j}{1-r}+1}})+1\right],\quad \forall i:1\leq i\leq n.
	\end{equation*}

\end{proof}

\begin{Rem}
	$p_i^C$ is clearly increasing in terms of its associated intrinsic utility $g_i$ (since $W()$ is increasing), and decreasing in terms of the others products' intrinsic utilities $g_j, j\neq i$. Also for all $i\in [n]$, $p_i^C>\frac{1-r}{\beta_i}$.
\end{Rem}
\vspace{1.5mm}
Generally the competitive price for product $i$, $p_i^C$, depends on the coordinates of the other prices, thus there is no  closed expression for each case. To overcome this issue, we propose the  greedy Algorithm \ref{alg::maxEq}, to compute the value of the  $\mathbf{p}^C$ for any set of parameters $\mathbf{g}, \boldsymbol{\beta}$, and  $r$.
We first consider the following definition: given a vector $x=(\phi_1,...,\phi_n,\phi_{n+1})$, we consider the transformation $\Phi:\R^{n+1}\times\R\times\{1,...,n\} \to \R^{n+1}$ that changes the $i$-th coordinate of $x$ by a given real value $a$, this is, $\Phi_{i,x}(a):=\Phi(x,a,i)=(\phi_1,...,\phi_{i-1},a,\phi_{i+1},...,\phi_n,\phi_{n+1})$

\begin{algorithm}[t]
	\SetAlgoNoLine
	\KwIn{Parameters: $r, c_i=e^{g_i/(1-r)}, i\in\{1,\dots, n\}$, $\epsilon>0$;
		Initial starting point: $z^0\in \R^n_+\times\{0\}$;  }
	\KwOut{A normalised  equilibrium price $z\in \R^n_+\times\{0\}$.}
	 $z\leftarrow z^0$\;
	 \Repeat{$\sqrt{\sum_{i=1}^n|z_i-(W(\frac{c_i}{\sum_{j=0,j\neq i}^{n}c_je^{-z_j+1}})+1)|^2}<\epsilon$}{
		\For{$i\in\{1,\dots, n\}$
		}{
			$z\gets  \Phi_{i,z}(W(\frac{c_i}{\sum_{j=0,j\neq i}^{n}c_je^{-z_j+1}})+1)$;  \quad \quad \quad \quad \quad \quad \quad   [{\bf Update}]  \	
		}
	}
	\caption{Find equilibrium $z^C$ given by Equation \eqref{Lambert}.}
	\label{alg::maxEq}
\end{algorithm}

%

In Lemma \ref{algoeq} we will show that Algorithm \ref{alg::maxEq} always terminates. We will prove it, by exploiting the  fixed point structure on how the  normalised prices are defined.
\begin{lemma}
	\label{algoeq}
	Algorithm \ref{alg::maxEq} is guaranteed to terminate, and its output is the normalised equilibrium price $\mathbf{z}^C=(z_1^C,\dots,z_n^C)$.
	
\end{lemma}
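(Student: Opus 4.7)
The plan is to view the body of the repeat-loop as one Gauss--Seidel sweep of the best-response map $T$, whose $i$-th coordinate is $T_i(z)=W\!\left(c_i/S_i(z)\right)+1$ with $S_i(z)=e+\sum_{j\le n,\,j\ne i} c_j\,e^{-z_j+1}$ (recall $c_{n+1}=1$ and $z_{n+1}=0$). By Equation \eqref{Lambert} in the proof of Theorem \ref{NE}, $z^C$ is the unique fixed point of $T$ in $(1,\infty)^n\times\{0\}$. Two structural facts drive everything: (i) $T_i$ is independent of $z_i$ and strictly increasing in each $z_j$, $j\ne i$, since enlarging $z_j$ shrinks $S_i$ and therefore enlarges the argument of the strictly increasing Lambert $W$; and (ii) $S_i(z)\ge e$ always, so every update satisfies $T_i(z)\in(1,U_i]$ with $U_i:=W(c_i/e)+1$. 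In particular, after the first full sweep the iterate lies in the compact box $B:=\prod_{i=1}^n[1,U_i]\times\{0\}$, and $z^C\in B$.

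The core of the proof is monotone convergence from the two corners of $B$. Starting from $z^U:=(U_1,\dots,U_n,0)$, fact (ii) gives $T_i(z^U)\le U_i$, so the end-of-sweep iterate $z^{(1)}$ satisfies $z^{(1)}\le z^U$. An induction on the sweep count $k$, with a nested sub-induction on the update index $i$ comparing partially updated states $z^{(k),i}$ with $z^{(k-1),i}$, combined with (i), shows $z^{(k)}\le z^{(k-1)}$ componentwise for all $k$. A similar induction using $T_i(z)\ge T_i(z^C)=z_i^C$ whenever $z\ge z^C$ gives $z^{(k)}\ge z^C$. Thus $\{z^{(k)}\}$ is a decreasing sequence in $B$ bounded below by $z^C$, so it converges to some $z^\star$; continuity of $T$ gives $T(z^\star)=z^\star$, and uniqueness from Theorem \ref{NE} forces $z^\star=z^C$. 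The symmetric argument from $z^L:=(1,\dots,1,0)$, where $T_i(z^L)>1=z_i^L$ and $z^L\le z^C$, produces an increasing sequence in $B$ converging to $z^C$.

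For an arbitrary $z^0\in\R^n_+\times\{0\}$, one sweep brings the iterate into $B$, so I may assume $z^L\le z^0\le z^U$. A Gauss--Seidel sweep preserves the componentwise order: if $z\le z'$ at the start of a sweep, a sub-induction on the update index $i$ using (i) shows the inequality survives each step, since unupdated coordinates retain their order and updated ones inherit it from $T_i(z)\le T_i(z')$. Hence the iterate starting from $z^0$ is sandwiched between the iterates from $z^L$ and $z^U$, both of which converge to $z^C$, so it converges to $z^C$ as well. Termination is then immediate: continuity of $T$ and $z^{(k)}\to z^C$ yield $\|z^{(k)}-T(z^{(k)})\|_2\to 0$, so the stopping criterion is eventually met.

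The step I expect to be the most delicate is the monotonicity induction for the Gauss--Seidel sweeps: because updates occur in place, the natural inductive hypothesis must be stated on the partially updated iterates $z^{(k),i}$ rather than only on the end-of-sweep states, and the monotonicity of $T_i$ must be invoked using precisely the coordinates already updated in the current sweep. Once this bookkeeping is set up, the remaining monotone-convergence and sandwich arguments are classical.
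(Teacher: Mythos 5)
Your proof is correct, but it takes a genuinely different route from the paper's. The paper's argument is a two-line compactness argument: it asserts the iterates are bounded, extracts a convergent subsequence $z^{k_l}\to z^C$ by Bolzano--Weierstrass, and passes to the limit in the recurrence to conclude $z^C$ is a fixed point. Your argument is an order-theoretic one: you exploit that the best-response map $T$ is coordinatewise monotone and maps everything into the compact box $\prod_i[1,W(c_i/e)+1]$, run monotone (decreasing resp.\ increasing) Gauss--Seidel iterations from the two corners of that box, and sandwich an arbitrary trajectory between them. What your approach buys is substantial: first, you actually justify boundedness via $S_i(z)\geq e$, which the paper leaves as ``clearly bounded''; second, the paper's limit-passing step is not airtight as written --- from $z^{k_l}\to z^C$ one gets $z^{k_l+1}=T(z^{k_l})\to T(z^C)$, but without knowing that $z^{k_l+1}\to z^C$ as well one cannot conclude $T(z^C)=z^C$, whereas your monotone-sandwich argument delivers convergence of the \emph{full} sequence and closes this gap; third, you treat the updates as the in-place Gauss--Seidel sweeps the pseudocode actually performs, while the paper's displayed recurrence is a Jacobi-style update in which every coordinate of $z^{k+1}$ depends only on $z^k$. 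The cost is the bookkeeping on partially updated iterates that you flag yourself, which is standard but must be done carefully; your proof is longer but strictly more rigorous than the one in the paper.
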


\begin{proof}
	We consider the sequence $(z^k)_{k\in \N}\in \R^{n+1}$ created by each time the Algorithm \ref{alg::maxEq} reaches the step [{\bf Update}],  its coordinates are defined by the recurrence: 
	\begin{equation}
		\label{recurr}z_i^{k+1}=W(\frac{c_i}{\sum_{j=0,j\neq i}^{n}c_je^{-z_j^k+1}})+1 \text{ for all }  i\in\{1,\dots, n\}, \text{ and } k\in \N.
	\end{equation} 
	$(z^k)_{k\in \N}$ is clearly bounded, hence the Bolzano-Weierstrass Theorem ( see for example in \cite{burk2011lebesgue} (Theorem 2.6)) implies that $z^k$ has a convergent subsequence $z^{k_l}$ with $l\in\N$. Let $z^C$ be the limit of $z^{k_l}$. As $W(\cdot)$ is a continuous function for positive arguments. Applying the limit when $l\to \infty$ in both sides of Equation \eqref{recurr},  necessarily $z^C$ must satisfy that for any $ i\in\{1,\dots, n\}$, $z_i^C=W(\frac{c_i}{\sum_{j=0,j\neq i}^{n}c_je^{-z_j^C+1}})+1$, hence Algorithm \ref{alg::maxEq}  terminates when it finds the Equilibrium $z^C$.
\end{proof}

The following Theorem shows the monotonic behaviour of the competitive price $\mathbf{p}^C$ in terms of the network effect parameter $r: 0<r<1$.
\begin{theorem}
	The competitive price $\mathbf{p}^C(r)=(p_1^C(r),\dots,p_n^C(r))\in [0,\infty)^n$ given by Equation \eqref{CompPrice} is decreasing as a function of the network effect parameter $r: 0<r<1$.
\end{theorem}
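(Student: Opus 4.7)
The plan is to perform implicit differentiation of the equilibrium characterization. By the proof of Theorem~\ref{NE}, the unique competitive equilibrium satisfies $\beta_i\, p_i^C(r)\bigl(1 - \phi_i^*(\mathbf{p}^C(r))\bigr) = 1 - r$ for each $i$, equivalently $p_i^C = (1-r)z_i^C/\beta_i$ with $z_i^C$ the unique solution of
\[
(z_i^C - 1)e^{z_i^C} = \frac{c_i(r)}{\sum_{j \neq i} c_j(r)\, e^{-z_j^C}}, \qquad c_j(r) := e^{g_j/(1-r)},
\]
and the convention $z_{n+1}^C = 0$. Writing $\beta_i\, \dot p_i^C = (1-r)\dot z_i^C - z_i^C$ (with $\dot{\,\cdot\,} := d/dr$), the desired statement $\dot p_i^C < 0$ becomes $(1-r)\dot z_i^C < z_i^C$ for every $i$, so the proof reduces to controlling how fast each normalised price $z_i^C$ grows in $r$.

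To extract $\dot z_i^C$, I would take logs in the above equation to obtain
\[
\ln(z_i^C - 1) + z_i^C + \ln S_i = \frac{g_i}{1-r}, \qquad S_i := \sum_{j \neq i} c_j\, e^{-z_j^C},
\]
and differentiate in $r$. Using the identity $z_i^C = 1/(1-\phi_i^*)$ recorded in the proof of Theorem~\ref{NE}, the coefficient of $\dot z_i^C$ on the left collapses to $1/\phi_i^*$. Expanding $\dot S_i/S_i$ via $\phi_j^* = c_j\, e^{-z_j^C}/M$ with $M := 1 + \sum_k c_k e^{-z_k^C}$, and noting that the no-purchase term ($c_{n+1}=1$, $z_{n+1}=0$) contributes nothing to $\dot S_i$, I obtain an $n\times n$ linear system in $(\dot z_1^C,\ldots,\dot z_n^C)$ whose driving term is $g_i/(1-r)^2$ and whose coefficient matrix is the sum of a diagonal matrix and a rank-one matrix both built from the market shares $\phi_j^*$.

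The main obstacle is bounding the solution of this fully coupled system tightly enough to recover $(1-r)\dot z_i^C < z_i^C$. My plan is to exploit the diagonal-plus-rank-one structure and apply the Sherman--Morrison identity to write each $\dot z_i^C$ in closed form, then use $\sum_{i=1}^{n+1}\phi_i^* = 1$ together with $z_i^C = 1/(1-\phi_i^*)$ to reduce the target inequality to one stated purely in the $\phi_j^*$'s and the $g_j$'s. I expect this last verification to be the crux of the argument: because a perturbation of $r$ simultaneously rescales every effective utility $g_j/(1-r)$, one cannot hope for termwise monotonicity, and the delicate bookkeeping that controls the growth of $z_i^C$ relative to $1/(1-r)$---likely leveraging the ordering $g_1 \geq g_2 \geq \cdots \geq g_n > 0$ and the Lambert-$W$ structure of $p_i^C$---is where essentially all the work will lie.
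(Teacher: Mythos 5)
Your setup is reasonable, but the proposal stops exactly where the proof has to begin. Everything up to the diagonal-plus-rank-one linear system is a mechanical reformulation: differentiating the equilibrium identity $\beta_i p_i^C(r)\bigl(1-\phi_i^*(\mathbf{p}^C(r))\bigr)=1-r$ merely restates the claim as the inequality $(1-r)\dot z_i^C < z_i^C$, and you explicitly defer its verification (``where essentially all the work will lie'') to an unperformed computation. Since that inequality \emph{is} the theorem, what you have written is a reduction of the statement to an equivalent statement, not a proof. Nor is there any a priori reason the Sherman--Morrison inversion produces signs you can control: the driving terms $g_i/(1-r)^2$ share a sign, but the rank-one correction couples all coordinates through the $\phi_j^*$'s, and the proposal gives no indication of how the ordering $g_1\geq\cdots\geq g_n$ or the Lambert-$W$ structure would actually be used to close the estimate. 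As it stands, this is a genuine gap.

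The paper avoids differentiating the coupled system altogether. It sums the first-order conditions \eqref{priceEq}, written as $\frac{1-r}{\beta_i p_i^C(r)} = 1-\phi_i^*(\mathbf{p}^C(r))$, over $i$ to get $(1-r)\sum_{i=1}^n\frac{1}{\beta_i p_i^C(r)} = n-1+\phi_{n+1}^*(\mathbf{p}^C(r))$, hence $\frac{n-1}{1-r}<\sum_{i=1}^n\frac{1}{\beta_i p_i^C(r)}<\frac{n}{1-r}$. From this squeeze it concludes the sum of reciprocals grows with $r$, so at least one $p_k^C(r)$ decreases, and then invokes the fixed-point form \eqref{Lambert} (each coordinate is increasing in the others) to propagate the decrease to every product. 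If you wish to salvage your route you must actually carry out the final inequality in the $\phi_j^*$'s and $g_j$'s; otherwise the aggregation argument above bypasses the linear system entirely.
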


\begin{proof}
	Imposing the first order conditions over each expected revenue function $w_i(\mathbf{p^C(r)})$, gives us Equation \eqref{priceEq}, which is defined in the following way:
	\[\beta_ip_i^C(r)=\frac{1-r}{1-\phi_i^*(\mathbf{p^C(r)})} \quad \mbox{ for all }  i\in\{1,\dots, n\},\]
	or equivalently:
	\begin{align}
		\nonumber\frac{1-r}{\beta_ip_i^C(r)}&=1-\phi_i^*(\mathbf{p^C(r)}) \quad \mbox{ for all }  i\in\{1,\dots, n\}\\
		\label{priceCompDecre}\Rightarrow  (1-r)\sum_{i=1}^n\frac{1}{\beta_ip_i^C(r)}&=n-1+\phi_{n+1}^*(\mathbf{p^C(r)}).
	\end{align}
	Notice that as $0<\phi_{n+1}^*(\mathbf{p^C(r)})<1$, then Equation \eqref{priceCompDecre}\ implies that
	
	\[\frac{n-1}{1-r}<\sum_{i=1}^n\frac{1}{\beta_ip_i^C(r)}<\frac{n}{1-r}.\]
	Clearly $\frac{n-1}{1-r}$ and $\frac{n}{1-r}$ are increasing in terms of $r$ (and independent of the intrinsic utility parameters), thus necessarily $\sum_{i=1}^n\frac{1}{\beta_ip_i}$ is increasing, which implies that there exists a  product $k\in\{1,\dots,n\}$ such that $p_k^C(r)$ is decreasing. But by definition of the competitive price, we have
	\[p_k^C(r)=\frac{1-r}{\beta_i}\left[W(\frac{ e^{g_k/(1-r)}}{e+\sum_{j=1, j\neq k}^n e^{\frac{g_j-\beta_jp^C_j(r)}{1-r}+1}})+1\right],\quad \forall  k\in\{1,\dots, n\}.\]
	hence if any $p_k^C(r)$ decreases, in order to preserve  the equilibrium, all the other coordinates must decrease as well, which proves the result.
\end{proof}

The following example shows the competitive prices for the case of $3$ products with fixed intrinsic utilities, a fixed value of price sensitivities and 4 different values of network parameters, $r$.

\begin{example}
	\label{PriceComEX} 
	Consider a set of network parameters given by  $r\in\{0.2, 0.4, 0.6, 0.8 \}$, and intrinsic utilities given by $(g_1,g_2,g_3)=(0.993,  0.480, 0.159)$, the competitive  price equilibria $p^C=(p_1^C,p_2^C,p_3^C)$, and the market share and expected revenue for product $i=1$ are given in the following table.
	\begin{center}
		\begin{tabular}{ l | c | c | c || c | c }
			$r$  & $p_1^C$ & $p_2^C$ &$ p_3^C$ & $\phi_1(p)$ & $w_1(p)$\\ \hline 
			0.2  & 11.461  & 9.912   & 9.298   & 0.302       & 3.461\\ 
			0.4  &  9.269  & 7.509   & 6.900   & 0.352       & 3.269\\ 
			0.6  &  7.243  & 5.082   & 4.498   & 0.448       & 3.243\\
			0.8  &  5.612  & 2.581   & 2.121   & 0.644       & 3.613 \\
			\hline
		\end{tabular}
	\end{center}

\end{example}
As we can observe from Examples \ref{PriceComEX}, the highest intrinsic utility product  $i=1$ has in general a decreasing price, and increasing market share, which eventually leads to have a higher revenue when $r=0.8$, a formal explanation of this phenomenon is still an open question. More numerical examples can be observed in the Appendix in Fig. \ref{fig:F1}, where the prices are also compared against the Monopolistic price $p^M$.

In Appendix \ref{sec:homogen} we have included a subcase of the price competition, the {\it homogeneous} case where all the products have the same intrinsic utility (this is, $g_i=g,  i\in\{1,\dots, n\}$). A similar case was studied in \cite{Du2016} so we compare our results against theirs. The main result from that section is that the market share of the no purchase option $\phi_{n+1}^*(r)$ is a decreasing function of $r$ (which we interpret as that in the presence of stronger network effects, people tend to purchase more). This result seems to be true also in the general case ($g_i$ different), but we only have observed this empirically (see Figure \eqref{fig:F5} in Appendix \ref{sec:exper}). 

In the following section we will compare the two different pricing strategies, including also the consumer's perspective.


\section{Monopolistic vs Competitive}
\label{sec:Compare}



In this section we will compare the different pricing schemes where network effects are present, in absolute terms (which prices are higher) and in relative terms from the consumer's perspective. We assume from now on, that the products have the same price sensitivities $\beta_i=\beta$ for all $ i\in\{1,\dots, n\}$. 
The following theorem recovers the intuitive result that the monopolistic price is higher than the competitive one.

\begin{theorem}
	\label{ComparePrice}
	For any set of parameters $g_i, i\in\{1,\dots, n\}$, $0\leq r<1$ and $\beta>0$, the monopolistic price $p^M$ is higher than the competitive price, $p_i^C$ for all $i\in\{1,\dots,n\}$.
\end{theorem}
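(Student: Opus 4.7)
The plan is to compare the Lambert--$W$ characterisations of $z^M := \beta p^M/(1-r)$ and $z_i^C := \beta p_i^C/(1-r)$, reducing the theorem to a one-dimensional monotonicity statement. Writing $c_k := e^{g_k/(1-r)}$ and recalling the normalised convention $c_{n+1} = 1$, $z_{n+1}^C = 0$, the first step is to collect the implicit equations already derived inside the proofs of the two previous pricing theorems. Equation \eqref{monoSol} in the proof of Theorem \ref{MonoPrice} states that
\[
(z^M - 1)\,e^{z^M} \;=\; \sum_{k=1}^{n} c_k,
\]
while the fixed-point identity \eqref{Lambert} in the proof of Theorem \ref{NE}, after multiplying numerator and denominator by $e^{-1}$, reads
\[
(z_i^C - 1)\,e^{z_i^C} \;=\; \frac{c_i}{1 + \sum_{j=1,\,j\neq i}^{n} c_j\, e^{-z_j^C}}, \qquad i\in\{1,\dots,n\}.
\]

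The second step is the comparison of the two right-hand sides. Since $n\geq 2$ and each term $c_j e^{-z_j^C}$ is strictly positive, the denominator in the competitive identity is strictly larger than $1$. Hence
\[
(z_i^C - 1)\,e^{z_i^C} \;<\; c_i \;\leq\; \sum_{k=1}^{n} c_k \;=\; (z^M - 1)\,e^{z^M}.
\]
The third step is to invert this inequality using the scalar function $f(z):=(z-1)e^z$. On $(0,\infty)$ one has $f'(z) = z e^z > 0$, so $f$ is strictly increasing there; moreover both $z^M$ and $z_i^C$ lie in $(1,\infty)$ (from equation \eqref{normPrice} and from the Remark following Theorem \ref{NE}, respectively, since $W(x)>0$ when $x>0$). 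Applying $f^{-1}$ to the displayed inequality therefore yields $z_i^C < z^M$, and multiplying through by $(1-r)/\beta$ gives $p_i^C < p^M$ for every $i\in\{1,\dots,n\}$, as required.

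I do not anticipate a serious obstacle here: the content of the argument is almost entirely bookkeeping, rewriting the Lambert--$W$ fixed-point identities so that both prices are characterised by an equation of the form $(z-1)e^z = \text{(something)}$ and then noting that the competitive right-hand side is strictly smaller. The only subtlety worth flagging is the role of the hypothesis $n\geq 2$: it is exactly the presence of at least one other competitor's term $c_j e^{-z_j^C}>0$ in the competitive denominator (together with the no-purchase contribution $c_{n+1}=1$) that upgrades the comparison to a strict inequality, matching the intended interpretation that competition drives prices strictly below the monopolistic benchmark.
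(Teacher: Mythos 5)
Your proof is correct and follows essentially the same route as the paper: both arguments reduce the comparison to the Lambert--$W$ fixed-point identities for the normalised prices and observe that the competitive right-hand side $c_i/(1+\sum_{j\neq i}c_je^{-z_j^C})$ is dominated by $\sum_k c_k$; inverting the strictly increasing map $z\mapsto(z-1)e^z$ is just the mirror image of the paper's appeal to the monotonicity of $W$. The only (cosmetic) differences are that the paper bounds the difference of the $W$-arguments via an explicit $A+B$ decomposition, whereas you note the denominator exceeds $1$, and that your version makes the inequality strict for $n\geq 2$, which the paper states only as $p^M\geq p_i^C$.
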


\begin{proof}
	We know that according to Equations \ref{monoPrice}, and \ref{CompPrice}, $p^M$ and $p_i^C$ are given respectively by
	\begin{align*}
		p^M&=\frac{1-r}{\beta}\left[W\left(\frac{\sum_{i=1}^ne^{g_i/(1-r)}}{e}\right)+1\right]\\
		p_i^C&=\frac{1-r}{\beta}\left[W\left(\frac{e^{g_i/(1-r)}}{e+\sum_{j\neq i}^ne^{1+\frac{g_j-\beta p_j^C}{1-r}}}\right)+1\right]
	\end{align*}
	Their respective vector forms are given by: $\mathbf{p}^M=(p^M,\dots, p^M)$ and  $\mathbf{p}^C=(p_1^C,\dots, p_n^C)$. Comparing both expressions we have that for any set of parameters  $g_i, i\in\{1,\dots, n\}$, $0\leq r<1$ and $\beta>0$ and for any product $i\in \{1,\dots,n\}$
	\[p^M\geq p_i^C\Leftrightarrow \frac{1}{e}\left[\sum_{i=1}^ne^{g_i/(1-r)}-\frac{e^{g_i/(1-r)}}{1+\sum_{j\neq i}^ne^{\frac{g_j-\beta p_j^C}{1-r}}}\right]\geq 0,\]
	But,
	\[\left[\sum_{i=1}^ne^{g_i/(1-r)}-\frac{e^{g_i/(1-r)}}{1+\sum_{j\neq i}^ne^{\frac{g_j-\beta p_j^C}{1-r}}}\right]=\underbrace{e^{g_i/(1-r)}\left(1-\frac{1}{1+\sum_{j\neq i}^ne^{\frac{g_j-\beta p_j^C}{1-r}}}\right)}_{:=A}+\underbrace{\sum_{j\neq i}^ne^{g_j/(1-r)}}_{:=B}\]
	
	Clearly $B>0$ and  since $e^{x}>0$ for any value of $x$, then for all $i\in\{1,\dots, n\}$ , $A>0$.
	Consequently $p^M\geq p_i^C$ as desired.
\end{proof}

The following theorem shows that for any product, the consumer's expected utility obtained from purchasing it,  is higher when the competitive price is used instead of the monopolistic price. This result is trivial when there is no network effects ($r=0$) since the utility is a decreasing function of the price, however if we include the non-linear effect of past purchases the result is not necessarily obvious (given the non linear dependency of the price in the market share).

\begin{theorem}
	\label{CompareCustUtil}
	For any product $ i\in\{1,\dots, n\}$,  in the long run, the expected utility perceived by a customer after purchasing product $i$ when the competitive price is used,  is higher than the case when  the monopolistic price is used.
\end{theorem}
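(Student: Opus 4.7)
The plan is to reduce the claim to an algebraic inequality relating the two equilibrium market shares and the two equilibrium price vectors, and then close it by exploiting the first-order characterizations established in Theorems~\ref{MonoPrice} and~\ref{NE}.

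First I would use Lemma~\ref{MSequilibrium}. Since $\phi_i^k\to\phi_i^*(p)$ almost surely and $\sum_{j}d_j^k$ grows linearly in $k$, we have $d_i^k/k\to\phi_i^*(p)$ a.s., so in the long run
\[
\EE[u_i^k]\;=\;g_i+r\,\EE[\ln d_i^k]-\beta p_i+\EE[\xi_i]\;\sim\;g_i+r\ln k+r\ln\phi_i^*(p)-\beta p_i+\EE[\xi_i].
\]
Because $r\ln k$ and $\EE[\xi_i]$ are identical under the competitive and the monopolistic regimes, the statement reduces to showing $h_i(p^C)\geq h_i(p^M)$, where $h_i(p):=g_i+r\ln\phi_i^*(p)-\beta p_i$.

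Substituting $\phi_i^*(p)=e^{(g_i-\beta p_i)/(1-r)}/Z(p)$, with $Z(p):=1+\sum_{j=1}^{n}e^{(g_j-\beta p_j)/(1-r)}$, simplifies $h_i(p)$ to $(g_i-\beta p_i)/(1-r)-r\ln Z(p)$, so the inequality to prove becomes
\[
\frac{\beta(p^M-p_i^C)}{1-r}\;\geq\;r\ln\frac{Z(p^C)}{Z(p^M)}.
\]
Both sides are nonnegative: the left-hand side by Theorem~\ref{ComparePrice}, and the right-hand side because $Z$ is strictly decreasing in each $p_j$ while $p_j^C\leq p^M$ for every~$j$.

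To close the gap, I would combine the two first-order conditions $\beta p^M\phi_{n+1}^*(p^M)=1-r$ (Equation~\eqref{priceandMS} under $\beta_i=\beta$) and $\beta p_i^C(1-\phi_i^*(p^C))=1-r$ (Equation~\eqref{priceEq}) with the identity $Z(p)=1/\phi_{n+1}^*(p)$. After substitution, the target becomes a single inequality among the shares $\phi_i^*(p^C)$, $\phi_{n+1}^*(p^C)$ and $\phi_{n+1}^*(p^M)$, which I would verify using the logarithmic estimate $\ln x\leq x-1$ together with the structural constraint $R(p^M)\geq R(p^C)$ implied by the monopolist's optimality.

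The main obstacle is precisely this last algebraic step: a naive monotonicity argument cannot work, because raising $p_i$ from $p_i^C$ to $p^M$ strictly decreases $h_i$, while simultaneously raising every competitor's price $p_j$ from $p_j^C$ to $p^M$ strictly increases $h_i$ through the network term $r\ln\phi_i^*$. The proof must therefore use the two first-order characterizations jointly, rather than separately, in order to show that the direct own-price effect always dominates the cross-price externality at the equilibria produced by Theorems~\ref{MonoPrice} and~\ref{NE}.
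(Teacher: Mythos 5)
Your reduction is correct and matches the paper's: after writing $\ln d_i^k=\ln k+\ln(d_i^k/k)$ and cancelling the $r\ln k$ terms, the claim becomes $\frac{\beta(p^M-p_i^C)}{1-r}\geq r\ln\frac{Z(p^C)}{Z(p^M)}$, and your observation that both sides are nonnegative is right. But the proof stops exactly where it needs to start: the final step is only announced ("combine the two first-order conditions\dots verify using $\ln x\leq x-1$ together with $R(p^M)\geq R(p^C)$"), never carried out, and you yourself flag it as "the main obstacle." As written this is a proof plan with the decisive inequality left open, and it is not evident that the sketched combination closes it — for instance, $\ln x\leq x-1$ applied to $Z(p^C)/Z(p^M)=\phi_{n+1}^*(p^M)/\phi_{n+1}^*(p^C)$ bounds the right-hand side from above by a quantity involving $\phi_{n+1}^*(p^C)$, for which no usable closed form is available, and the role of $R(p^M)\geq R(p^C)$ is never made concrete.

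The paper closes the argument by a different and simpler route that avoids any quantitative comparison of the two sides: it asserts that the network term is itself nonnegative, i.e.\ $\phi_i^*(\mathbf{p}^C)\geq\phi_i^*(\mathbf{p}^M)$ for every product $i$ (every equilibrium market share is at least as large under the competitive prices as under the monopoly price), so that $r\ln\bigl[\phi_i^*(\mathbf{p}^C)/\phi_i^*(\mathbf{p}^M)\bigr]+\beta(p^M-p_i^C)$ is a sum of a nonnegative term and a strictly positive term. In your notation that assertion is precisely $\frac{\beta(p^M-p_i^C)}{1-r}\geq\ln\frac{Z(p^C)}{Z(p^M)}$, i.e.\ your target inequality without the factor $r<1$; it implies what you need since the logarithm is nonnegative. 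So the single missing ingredient in your argument is this market-share comparison (or any substitute for it). Be aware that the paper's own justification of that comparison is terse — it invokes only the monotonicity of $\phi_i^*$ in the own price $p_i$, while moving from $\mathbf{p}^M$ to $\mathbf{p}^C$ also lowers the competitors' prices, which pushes $\phi_i^*$ the other way — so if you adopt this route you should prove the claim $\phi_i^*(\mathbf{p}^C)\geq\phi_i^*(\mathbf{p}^M)$ carefully, e.g.\ via the first-order identities $\phi_i^*(\mathbf{p}^C)=(z_i^C-1)/z_i^C$ and $\phi_{n+1}^*(\mathbf{p}^M)=1/z^M$, rather than by own-price monotonicity alone.
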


\begin{proof} We want to prove that asymptotically $ u_i^k(\mathbf{p}^C)-u_i^k(\mathbf{p}^M)$ is strictly positive, with $u_i^k(\mathbf{p})$ given by Equation \eqref{customerUt}. We know that by to Lemma \ref{MSequilibrium}, $\frac{d_i^k(\mathbf{p})}{k}\xrightarrow[a.s.]{}\phi_i^*(\mathbf{p})$, then

	\begin{align*}
		u^k_i(\mathbf{p}^C)-u^k_i(\mathbf{p}^M)\xrightarrow[a.s.]{}& r[\log(\phi_i^*(\mathbf{p}^C))-\log(\phi_i^*(\mathbf{p}^M))]-\beta(p_i^C-p^M)\\
		&=r\log\left[\frac{\phi_i^*(\mathbf{p}^C)}{\phi_i^*(\mathbf{p}^M)}\right]+\beta(p^M-p_i^C)
	\end{align*} 
	
	According to Theorem \ref{ComparePrice}, we know that $\beta(p^M-p_i^C)>0$, on the other hand we have $\phi_i^*(\mathbf{p})=\dfrac{e^{g_i/(1-r)}}{e^{\frac{\beta p_i}{1-r}}+\sum_je^{g_j/(1-r)}e^{\frac{\beta (p_i-p_j)}{1-r}}}$, which is clearly decreasing in terms of $p_i$, then as $p^M>p_i^C$ for all $i\in \{1,..,n\}$, necessarily $r\log\left[\frac{\phi_i^*(\mathbf{p}^C)}{\phi_i^*(\mathbf{p}^M)}\right]>0$ for all $i$, meaning that $r\log\left[\frac{\phi_i^*(\mathbf{p}^C)}{\phi_i^*(\mathbf{p}^M)}\right]+\beta(p^M-p_i^C)>0$ as desired. 
\end{proof}

The structure of the market share in the equilibrium (Equation\eqref{MSequil}) implies that the highest market share would be assigned to the product with highest value of $\dfrac{g_i-\beta_ip_i}{1-r}$ which at least for the competitive price, $\mathbf{p}^C$ is a increasing function of $r$, meaning that in general, the consumer's utility  associated to the product with highest intrinsic utility ($i=1$),  increases as $r$ approaches to 1. The following example depicts this effect.

\begin{example}
	\label{utilIncre}
	Consider a large enough amount of customers such that for any product $i\in\{1,\dots,n\}$, $\frac{d_i^k}{k+1}=\phi_i^*(\mathbf{p})$, where $\phi_i^*(\mathbf{p})$ is the market share in the equilibrium (see Equation\eqref{MSequil}) under a price $p$ (competitive Nash Equilibrium and/or monopolistic price). Customer $k+1$ will then choose strategically a product $j=j(q,r,\beta, p, \xi)$ that maximises his expected utility of purchasing any product (or he will choose the no purchase option), this is, using formula \eqref{customerUt}, we have
	
	\[j\in \argmax_{0\leq i\leq n}\EE[g_i+r\ln(d_i^k)-\beta p_i+\xi_i]\]
	
	Where $\xi_i$, $i\in[n+1]$ were chosen to be i.i.d random variables following a Gumbel distribution, in particular we have that $\EE[\xi_i-\xi_j]=0$ for all pairs $i,j\in\{1,\dots,n,n+1\}$. Let $v_j(\mathbf{p}):=u_j(\mathbf{p})-\xi_j$ and consider the following parameters: $\beta_i=\beta=0.1$, $k=10K$, and intrinsic utilities given by $(g_1,g_2,g_3)=(0.993,  0.480, 0.159)$
	
	The following table summarises how the expected utilities, $\EE[v_j]$, behave under different values of $r$. The second and third  columns show which product, $j$ is the one  that maximises the expected utility, under the competitive and monopolistic pricing ( $j^C$  and $j^M$ respectively). The fourth and fifth column show the respective competitive and monopolistic prices for those products. Finally, the last two columns show the expected values of $v_{j^C}$ and $v_{j^M}$ respectively.
	
	\begin{center}
		\begin{tabular}{c | c | c | c |  c | c | c  }
			$r$ & $j^C$ & $j^M$ & $p_{j^C}^C$ & $p^M$ & $\EE[v_{j^C}(\mathbf{p}^C)]$ & $\EE[v_{j^M}(\mathbf{p}^M)]$ \\ \hline 
			0.2 & 1 & 1 &11.461& 15.498 & 2.831 & 2.395 \\ 
			0.4 & 1 & 1  & 9.269 & 12.523 & 6.097 &5.721 \\ 
			0.6 & 1 & 1 & 7.243& 9.798 & 9.458 & 9.167 \\ 
			0.8 & 1 & 1 & 5.613& 7.934 & 12.974 & 12.791 \\ 
			
			\hline
		\end{tabular}
	\end{center}

\end{example}

Where in each one of the cases, the highest intrinsic utility product ($j=1$) is the one with the largest chances of being chosen.

\section{Conclusions and open problems}
\label{sec:conclusion}

In this work we have designed a model for consumer choice, based on a MNL model with non-linear network effects.
We studied a multi-seller pricing problem where sellers can collaborate or compete, finding in each case a unique equilibrium price (monopolistic price and Nash Equilibrium respectively). We also studied the monotonic behaviour of the market shares, prices and revenues in terms of the network parameter $r$ (both theoretically and numerically). We finally compared both pricing strategies from the consumer's perspective, recovering for our model some well known results from the traditional MNL, such as that the monopolistic price is higher than the competitive one, and that the utility perceived by the consumers is higher when the competitive price is used. We also analysed numerically how increasing the network parameter $r$ generates higher utilities for the consumer.

Some interesting questions remain open, for example, numerically we detected that  the revenue for the highest quality product $w_1$ in the competitive case seems to decrease and then increase when  the network parameter $r$ approaches to 1, then {\it is there a critical value $\hat{r}$ such that for all $r: 1<r<\hat{r}$,  $w_1$ is increasing}? Numerically, we also have observed that the expected utility for the costumers seem to increase with the value of $r$, but we do not have a proof for that phenomenon. Answering those kind of questions would help to find the best value of $r$ such that both consumers and sellers are benefited from network effects.

\bibliographystyle{ACM-Reference-Format}
\bibliography{aux_files/references}


\begin{thebibliography}{36}


\ifx \showCODEN    \undefined \def \showCODEN     #1{\unskip}     \fi
\ifx \showDOI      \undefined \def \showDOI       #1{#1}\fi
\ifx \showISBNx    \undefined \def \showISBNx     #1{\unskip}     \fi
\ifx \showISBNxiii \undefined \def \showISBNxiii  #1{\unskip}     \fi
\ifx \showISSN     \undefined \def \showISSN      #1{\unskip}     \fi
\ifx \showLCCN     \undefined \def \showLCCN      #1{\unskip}     \fi
\ifx \shownote     \undefined \def \shownote      #1{#1}          \fi
\ifx \showarticletitle \undefined \def \showarticletitle #1{#1}   \fi
\ifx \showURL      \undefined \def \showURL       {\relax}        \fi
\providecommand\bibfield[2]{#2}
\providecommand\bibinfo[2]{#2}
\providecommand\natexlab[1]{#1}
\providecommand\showeprint[2][]{arXiv:#2}

\bibitem[\protect\citeauthoryear{Berbeglia}{Berbeglia}{2018}]%
        {berbeglia2018generalized}
\bibfield{author}{\bibinfo{person}{Gerardo Berbeglia}.}
  \bibinfo{year}{2018}\natexlab{}.
\newblock \showarticletitle{The generalized stochastic preference choice
  model}.
\newblock \bibinfo{journal}{\emph{arXiv preprint arXiv:1803.04244}}
  (\bibinfo{year}{2018}).
\newblock


\bibitem[\protect\citeauthoryear{Besbes and Saur{\'e}}{Besbes and
  Saur{\'e}}{2016}]%
        {besbes2016product}
\bibfield{author}{\bibinfo{person}{Omar Besbes} {and} \bibinfo{person}{Denis
  Saur{\'e}}.} \bibinfo{year}{2016}\natexlab{}.
\newblock \showarticletitle{Product assortment and price competition under
  multinomial logit demand}.
\newblock \bibinfo{journal}{\emph{Production and Operations Management}}
  \bibinfo{volume}{25}, \bibinfo{number}{1} (\bibinfo{year}{2016}),
  \bibinfo{pages}{114--127}.
\newblock


\bibitem[\protect\citeauthoryear{Block and Marschak}{Block and
  Marschak}{1959}]%
        {block1959random}
\bibfield{author}{\bibinfo{person}{Henry~David Block} {and}
  \bibinfo{person}{Jacob Marschak}.} \bibinfo{year}{1959}\natexlab{}.
\newblock \bibinfo{booktitle}{\emph{Random orderings and stochastic theories of
  response}}.
\newblock \bibinfo{type}{{T}echnical {R}eport}. \bibinfo{institution}{Cowles
  Foundation for Research in Economics, Yale University}.
\newblock


\bibitem[\protect\citeauthoryear{Burk}{Burk}{2011}]%
        {burk2011lebesgue}
\bibfield{author}{\bibinfo{person}{Frank Burk}.}
  \bibinfo{year}{2011}\natexlab{}.
\newblock \bibinfo{booktitle}{\emph{Lebesgue measure and integration: an
  introduction}}. Vol.~\bibinfo{volume}{32}.
\newblock \bibinfo{publisher}{John Wiley \& Sons}.
\newblock


\bibitem[\protect\citeauthoryear{Cabral and Hortacsu}{Cabral and
  Hortacsu}{2010}]%
        {cabral2010dynamics}
\bibfield{author}{\bibinfo{person}{Luis Cabral} {and} \bibinfo{person}{Ali
  Hortacsu}.} \bibinfo{year}{2010}\natexlab{}.
\newblock \showarticletitle{The dynamics of seller reputation: Evidence from
  eBay}.
\newblock \bibinfo{journal}{\emph{The Journal of Industrial Economics}}
  \bibinfo{volume}{58}, \bibinfo{number}{1} (\bibinfo{year}{2010}),
  \bibinfo{pages}{54--78}.
\newblock


\bibitem[\protect\citeauthoryear{Candogan, Bimpikis, and Ozdaglar}{Candogan
  et~al\mbox{.}}{2010}]%
        {candogan2010optimal}
\bibfield{author}{\bibinfo{person}{Ozan Candogan}, \bibinfo{person}{Kostas
  Bimpikis}, {and} \bibinfo{person}{Asuman Ozdaglar}.}
  \bibinfo{year}{2010}\natexlab{}.
\newblock \showarticletitle{Optimal pricing in the presence of local network
  effects}. In \bibinfo{booktitle}{\emph{International Workshop on Internet and
  Network Economics}}. Springer, \bibinfo{pages}{118--132}.
\newblock


\bibitem[\protect\citeauthoryear{Chen and Chen}{Chen and Chen}{2017}]%
        {chen2017duopoly}
\bibfield{author}{\bibinfo{person}{Ningyuan Chen} {and}
  \bibinfo{person}{Ying-Ju Chen}.} \bibinfo{year}{2017}\natexlab{}.
\newblock \bibinfo{title}{Duopoly Competition with Network Effects in Discrete
  Choice Models}.  (\bibinfo{year}{2017}).
\newblock


\bibitem[\protect\citeauthoryear{Chen, Lu, Sun, Tang, Wang, and Zhu}{Chen
  et~al\mbox{.}}{2011}]%
        {chen2011optimal}
\bibfield{author}{\bibinfo{person}{Wei Chen}, \bibinfo{person}{Pinyan Lu},
  \bibinfo{person}{Xiaorui Sun}, \bibinfo{person}{Bo Tang},
  \bibinfo{person}{Yajun Wang}, {and} \bibinfo{person}{Zeyuan~Allen Zhu}.}
  \bibinfo{year}{2011}\natexlab{}.
\newblock \showarticletitle{Optimal pricing in social networks with incomplete
  information}. In \bibinfo{booktitle}{\emph{International Workshop on Internet
  and Network Economics}}. Springer, \bibinfo{pages}{49--60}.
\newblock


\bibitem[\protect\citeauthoryear{Corless, Gonnet, Hare, Jeffrey, and
  Knuth}{Corless et~al\mbox{.}}{1996}]%
        {corless1996lambertw}
\bibfield{author}{\bibinfo{person}{Robert~M Corless}, \bibinfo{person}{Gaston~H
  Gonnet}, \bibinfo{person}{David~EG Hare}, \bibinfo{person}{David~J Jeffrey},
  {and} \bibinfo{person}{Donald~E Knuth}.} \bibinfo{year}{1996}\natexlab{}.
\newblock \showarticletitle{On the LambertW function}.
\newblock \bibinfo{journal}{\emph{Advances in Computational mathematics}}
  \bibinfo{volume}{5}, \bibinfo{number}{1} (\bibinfo{year}{1996}),
  \bibinfo{pages}{329--359}.
\newblock


\bibitem[\protect\citeauthoryear{Crapis, Ifrach, Maglaras, and Scarsini}{Crapis
  et~al\mbox{.}}{2016}]%
        {crapis2016monopoly}
\bibfield{author}{\bibinfo{person}{Davide Crapis}, \bibinfo{person}{Bar
  Ifrach}, \bibinfo{person}{Costis Maglaras}, {and} \bibinfo{person}{Marco
  Scarsini}.} \bibinfo{year}{2016}\natexlab{}.
\newblock \showarticletitle{Monopoly pricing in the presence of social
  learning}.
\newblock \bibinfo{journal}{\emph{Management Science}} \bibinfo{volume}{63},
  \bibinfo{number}{11} (\bibinfo{year}{2016}), \bibinfo{pages}{3586--3608}.
\newblock


\bibitem[\protect\citeauthoryear{Cui and Zhu}{Cui and Zhu}{2016}]%
        {cui2016exact}
\bibfield{author}{\bibinfo{person}{Zhenyu Cui} {and} \bibinfo{person}{Lingjiong
  Zhu}.} \bibinfo{year}{2016}\natexlab{}.
\newblock \bibinfo{title}{Exact Optimal Pricing under the Multinomial Logit
  Choice Model with Network Effects}.  (\bibinfo{year}{2016}).
\newblock


\bibitem[\protect\citeauthoryear{Daly and Zachary}{Daly and Zachary}{1978}]%
        {daly1978improved}
\bibfield{author}{\bibinfo{person}{Andrew Daly} {and} \bibinfo{person}{Stanley
  Zachary}.} \bibinfo{year}{1978}\natexlab{}.
\newblock \showarticletitle{Improved multiple choice models}.
\newblock \bibinfo{journal}{\emph{Determinants of travel choice}}
  \bibinfo{volume}{335} (\bibinfo{year}{1978}), \bibinfo{pages}{357}.
\newblock


\bibitem[\protect\citeauthoryear{Davis, Gallego, and Topaloglu}{Davis
  et~al\mbox{.}}{2014}]%
        {davis2014assortment}
\bibfield{author}{\bibinfo{person}{James~M Davis}, \bibinfo{person}{Guillermo
  Gallego}, {and} \bibinfo{person}{Huseyin Topaloglu}.}
  \bibinfo{year}{2014}\natexlab{}.
\newblock \showarticletitle{Assortment optimization under variants of the
  nested logit model}.
\newblock \bibinfo{journal}{\emph{Operations Research}} \bibinfo{volume}{62},
  \bibinfo{number}{2} (\bibinfo{year}{2014}), \bibinfo{pages}{250--273}.
\newblock


\bibitem[\protect\citeauthoryear{Dhar}{Dhar}{1997}]%
        {dhar1997consumer}
\bibfield{author}{\bibinfo{person}{Ravi Dhar}.}
  \bibinfo{year}{1997}\natexlab{}.
\newblock \showarticletitle{Consumer preference for a no-choice option}.
\newblock \bibinfo{journal}{\emph{Journal of consumer research}}
  \bibinfo{volume}{24}, \bibinfo{number}{2} (\bibinfo{year}{1997}),
  \bibinfo{pages}{215--231}.
\newblock


\bibitem[\protect\citeauthoryear{Du, Cooper, and Wang}{Du
  et~al\mbox{.}}{2016}]%
        {du2016optimal}
\bibfield{author}{\bibinfo{person}{Chenhao Du}, \bibinfo{person}{William~L
  Cooper}, {and} \bibinfo{person}{Zizhuo Wang}.}
  \bibinfo{year}{2016}\natexlab{}.
\newblock \showarticletitle{Optimal pricing for a multinomial logit choice
  model with network effects}.
\newblock \bibinfo{journal}{\emph{Operations Research}} \bibinfo{volume}{64},
  \bibinfo{number}{2} (\bibinfo{year}{2016}), \bibinfo{pages}{441--455}.
\newblock


\bibitem[\protect\citeauthoryear{Du, Cooper, and Wang}{Du
  et~al\mbox{.}}{2018}]%
        {du2018optimal}
\bibfield{author}{\bibinfo{person}{Chenhao Du}, \bibinfo{person}{William~L
  Cooper}, {and} \bibinfo{person}{Zizhuo Wang}.}
  \bibinfo{year}{2018}\natexlab{}.
\newblock \showarticletitle{Optimal worst-case pricing for a logit demand model
  with network effects}.
\newblock \bibinfo{journal}{\emph{Operations Research Letters}}
  (\bibinfo{year}{2018}).
\newblock


\bibitem[\protect\citeauthoryear{Duflo and Wilson}{Duflo and Wilson}{1997}]%
        {Duflo97}
\bibfield{author}{\bibinfo{person}{Marie Duflo} {and}
  \bibinfo{person}{Stephen~S Wilson}.} \bibinfo{year}{1997}\natexlab{}.
\newblock \bibinfo{booktitle}{\emph{Random iterative models}}.
  Vol.~\bibinfo{volume}{22}.
\newblock \bibinfo{publisher}{Springer Berlin}.
\newblock


\bibitem[\protect\citeauthoryear{Echenique and Saito}{Echenique and
  Saito}{2015}]%
        {echenique2015general}
\bibfield{author}{\bibinfo{person}{Federico Echenique} {and}
  \bibinfo{person}{Kota Saito}.} \bibinfo{year}{2015}\natexlab{}.
\newblock \showarticletitle{General {L}uce Model}.
\newblock \bibinfo{journal}{\emph{Economic Theory}} (\bibinfo{year}{2015}),
  \bibinfo{pages}{1--16}.
\newblock


\bibitem[\protect\citeauthoryear{Echenique, Saito, and Tserenjigmid}{Echenique
  et~al\mbox{.}}{2018}]%
        {echenique2018perception}
\bibfield{author}{\bibinfo{person}{Federico Echenique}, \bibinfo{person}{Kota
  Saito}, {and} \bibinfo{person}{Gerelt Tserenjigmid}.}
  \bibinfo{year}{2018}\natexlab{}.
\newblock \showarticletitle{The perception-adjusted {L}uce model}.
\newblock \bibinfo{journal}{\emph{Mathematical Social Sciences}}
  \bibinfo{volume}{93} (\bibinfo{year}{2018}), \bibinfo{pages}{67--76}.
\newblock


\bibitem[\protect\citeauthoryear{Ezra, Feldman, Roughgarden, and
  Suksompong}{Ezra et~al\mbox{.}}{2017}]%
        {ezra2017pricing}
\bibfield{author}{\bibinfo{person}{Tomer Ezra}, \bibinfo{person}{Michal
  Feldman}, \bibinfo{person}{Tim Roughgarden}, {and} \bibinfo{person}{Warut
  Suksompong}.} \bibinfo{year}{2017}\natexlab{}.
\newblock \showarticletitle{Pricing identical items}.
\newblock \bibinfo{journal}{\emph{arXiv preprint arXiv:1705.06623}}
  (\bibinfo{year}{2017}).
\newblock


\bibitem[\protect\citeauthoryear{Feldman, Liu, Topaloglu, and Ziya}{Feldman
  et~al\mbox{.}}{2014}]%
        {feldman2014appointment}
\bibfield{author}{\bibinfo{person}{Jacob Feldman}, \bibinfo{person}{Nan Liu},
  \bibinfo{person}{Huseyin Topaloglu}, {and} \bibinfo{person}{Serhan Ziya}.}
  \bibinfo{year}{2014}\natexlab{}.
\newblock \showarticletitle{Appointment scheduling under patient preference and
  no-show behavior}.
\newblock \bibinfo{journal}{\emph{Operations Research}} \bibinfo{volume}{62},
  \bibinfo{number}{4} (\bibinfo{year}{2014}), \bibinfo{pages}{794--811}.
\newblock


\bibitem[\protect\citeauthoryear{Feng and Hu}{Feng and Hu}{2017}]%
        {feng2017blockbuster}
\bibfield{author}{\bibinfo{person}{Yinbo Feng} {and} \bibinfo{person}{Ming
  Hu}.} \bibinfo{year}{2017}\natexlab{}.
\newblock \bibinfo{title}{Blockbuster or Niche? Competitive Strategy Under
  Network Effects}.  (\bibinfo{year}{2017}).
\newblock


\bibitem[\protect\citeauthoryear{Gallego, Ratliff, and Shebalov}{Gallego
  et~al\mbox{.}}{2014}]%
        {gallego2014general}
\bibfield{author}{\bibinfo{person}{Guillermo Gallego}, \bibinfo{person}{Richard
  Ratliff}, {and} \bibinfo{person}{Sergey Shebalov}.}
  \bibinfo{year}{2014}\natexlab{}.
\newblock \showarticletitle{A general attraction model and sales-based linear
  program for network revenue management under customer choice}.
\newblock \bibinfo{journal}{\emph{Operations Research}} \bibinfo{volume}{63},
  \bibinfo{number}{1} (\bibinfo{year}{2014}), \bibinfo{pages}{212--232}.
\newblock


\bibitem[\protect\citeauthoryear{Kushner and Yin}{Kushner and Yin}{2003}]%
        {Kushner03}
\bibfield{author}{\bibinfo{person}{Harold~J Kushner} {and}
  \bibinfo{person}{George Yin}.} \bibinfo{year}{2003}\natexlab{}.
\newblock \bibinfo{booktitle}{\emph{Stochastic approximation and recursive
  algorithms and applications}}. Vol.~\bibinfo{volume}{35}.
\newblock \bibinfo{publisher}{Springer Science; Business Media}.
\newblock


\bibitem[\protect\citeauthoryear{Li and Huh}{Li and Huh}{2011}]%
        {li2011pricing}
\bibfield{author}{\bibinfo{person}{Hongmin Li} {and}
  \bibinfo{person}{Woonghee~Tim Huh}.} \bibinfo{year}{2011}\natexlab{}.
\newblock \showarticletitle{Pricing multiple products with the multinomial
  logit and nested logit models: Concavity and implications}.
\newblock \bibinfo{journal}{\emph{Manufacturing \& Service Operations
  Management}} \bibinfo{volume}{13}, \bibinfo{number}{4}
  (\bibinfo{year}{2011}), \bibinfo{pages}{549--563}.
\newblock


\bibitem[\protect\citeauthoryear{Ljung}{Ljung}{1977}]%
        {Ljung77}
\bibfield{author}{\bibinfo{person}{Lennart Ljung}.}
  \bibinfo{year}{1977}\natexlab{}.
\newblock \showarticletitle{Analysis of recursive stochastic algorithms}.
\newblock \bibinfo{journal}{\emph{IEEE Trans. Autom. Control,}}
  \bibinfo{volume}{22}, \bibinfo{number}{4} (\bibinfo{year}{1977}),
  \bibinfo{pages}{551--575}.
\newblock


\bibitem[\protect\citeauthoryear{Luce}{Luce}{1959}]%
        {luce1959individual}
\bibfield{author}{\bibinfo{person}{Duncan~R Luce}.}
  \bibinfo{year}{1959}\natexlab{}.
\newblock \showarticletitle{Individual Choice Behavior}.
\newblock  (\bibinfo{year}{1959}).
\newblock


\bibitem[\protect\citeauthoryear{Maldonado, Van~Hentenryck, Berbeglia, and
  Berbeglia}{Maldonado et~al\mbox{.}}{2018}]%
        {maldonado2018}
\bibfield{author}{\bibinfo{person}{Felipe Maldonado}, \bibinfo{person}{Pascal
  Van~Hentenryck}, \bibinfo{person}{Gerardo Berbeglia}, {and}
  \bibinfo{person}{Franco Berbeglia}.} \bibinfo{year}{2018}\natexlab{}.
\newblock \showarticletitle{Popularity Signals in Trial-Offer Markets with
  Social Influence and Position Bias}.
\newblock \bibinfo{journal}{\emph{European Journal of Operational Research}}
  \bibinfo{volume}{266}, \bibinfo{number}{2} (\bibinfo{year}{2018}),
  \bibinfo{pages}{775--793}.
\newblock


\bibitem[\protect\citeauthoryear{McFadden}{McFadden}{1973}]%
        {mcfadden1973conditional}
\bibfield{author}{\bibinfo{person}{Daniel McFadden}.}
  \bibinfo{year}{1973}\natexlab{}.
\newblock \showarticletitle{Conditional logit analysis of qualitative choice
  behavior}.
\newblock  (\bibinfo{year}{1973}).
\newblock


\bibitem[\protect\citeauthoryear{Rusmevichientong, Shen, and
  Shmoys}{Rusmevichientong et~al\mbox{.}}{2010a}]%
        {rusmevichientong2010dynamic}
\bibfield{author}{\bibinfo{person}{Paat Rusmevichientong},
  \bibinfo{person}{Zuo-Jun~Max Shen}, {and} \bibinfo{person}{David~B Shmoys}.}
  \bibinfo{year}{2010}\natexlab{a}.
\newblock \showarticletitle{Dynamic assortment optimization with a multinomial
  logit choice model and capacity constraint}.
\newblock \bibinfo{journal}{\emph{Operations research}} \bibinfo{volume}{58},
  \bibinfo{number}{6} (\bibinfo{year}{2010}), \bibinfo{pages}{1666--1680}.
\newblock


\bibitem[\protect\citeauthoryear{Rusmevichientong, Shmoys, and
  Topaloglu}{Rusmevichientong et~al\mbox{.}}{2010b}]%
        {rusmevichientong2010assortment}
\bibfield{author}{\bibinfo{person}{Paat Rusmevichientong},
  \bibinfo{person}{David Shmoys}, {and} \bibinfo{person}{Huseyin Topaloglu}.}
  \bibinfo{year}{2010}\natexlab{b}.
\newblock \bibinfo{booktitle}{\emph{Assortment optimization with mixtures of
  logits}}.
\newblock \bibinfo{type}{{T}echnical {R}eport}. \bibinfo{institution}{Tech.
  rep., School of IEOR, Cornell University}.
\newblock


\bibitem[\protect\citeauthoryear{S{\"a}{\"a}skilahti}{S{\"a}{\"a}skilahti}{2015}]%
        {saaskilahti2015monopoly}
\bibfield{author}{\bibinfo{person}{Pekka S{\"a}{\"a}skilahti}.}
  \bibinfo{year}{2015}\natexlab{}.
\newblock \showarticletitle{Monopoly pricing of social goods}.
\newblock \bibinfo{journal}{\emph{International Journal of the Economics of
  Business}} \bibinfo{volume}{22}, \bibinfo{number}{3} (\bibinfo{year}{2015}),
  \bibinfo{pages}{429--448}.
\newblock


\bibitem[\protect\citeauthoryear{Talluri and Van~Ryzin}{Talluri and
  Van~Ryzin}{2004}]%
        {talluri2004revenue}
\bibfield{author}{\bibinfo{person}{Kalyan Talluri} {and}
  \bibinfo{person}{Garrett Van~Ryzin}.} \bibinfo{year}{2004}\natexlab{}.
\newblock \showarticletitle{Revenue management under a general discrete choice
  model of consumer behavior}.
\newblock \bibinfo{journal}{\emph{Management Science}} \bibinfo{volume}{50},
  \bibinfo{number}{1} (\bibinfo{year}{2004}), \bibinfo{pages}{15--33}.
\newblock


\bibitem[\protect\citeauthoryear{Wang and Wang}{Wang and Wang}{2016}]%
        {wang2016consumer}
\bibfield{author}{\bibinfo{person}{Ruxian Wang} {and} \bibinfo{person}{Zizhuo
  Wang}.} \bibinfo{year}{2016}\natexlab{}.
\newblock \showarticletitle{Consumer choice models with endogenous network
  effects}.
\newblock \bibinfo{journal}{\emph{Management Science}} \bibinfo{volume}{63},
  \bibinfo{number}{11} (\bibinfo{year}{2016}), \bibinfo{pages}{3944--3960}.
\newblock


\bibitem[\protect\citeauthoryear{Wang, Wang, and Wang}{Wang
  et~al\mbox{.}}{2014}]%
        {wang2014quantifying}
\bibfield{author}{\bibinfo{person}{Ting Wang}, \bibinfo{person}{Dashun Wang},
  {and} \bibinfo{person}{Fei Wang}.} \bibinfo{year}{2014}\natexlab{}.
\newblock \showarticletitle{Quantifying herding effects in crowd wisdom}. In
  \bibinfo{booktitle}{\emph{Proceedings of the 20th ACM SIGKDD international
  conference on Knowledge discovery and data mining}}. ACM,
  \bibinfo{pages}{1087--1096}.
\newblock


\bibitem[\protect\citeauthoryear{Williams}{Williams}{1977}]%
        {williams1977formation}
\bibfield{author}{\bibinfo{person}{Huw~CWL Williams}.}
  \bibinfo{year}{1977}\natexlab{}.
\newblock \showarticletitle{On the formation of travel demand models and
  economic evaluation measures of user benefit}.
\newblock \bibinfo{journal}{\emph{Environment and Planning A}}
  \bibinfo{volume}{9}, \bibinfo{number}{3} (\bibinfo{year}{1977}),
  \bibinfo{pages}{285--344}.
\newblock


\end{thebibliography}

\newpage
\appendix
\section{MODEL COMPARISON}
\label{sec:comparison}
In the context of monopolistic pricing, many research has been done using variations of the Multinomial Logit model, particularly pricing under the standard  MNL has some very well known properties (e.g., for products with the same intrinsic quality, the optimal (monopolistic) price is the same for all the products ). In general, the functional form of the consumers' utilities lead to different probability functions that drive the behaviour of the model. It is important  then, to compare $\mathbf{p}^M$ from our model against the monopolistic price obtained with other models (classic MNL, and the model proposed by \cite{du2016optimal}). To do that we proceed to characterise the two models  we will be comparing against (under our notation), using their probabilities. 
\begin{definition}
	The probability $\pi_i^C$ of choosing product  $i\in \{1,\dots, n\}$   for the classic MNL model (without network effects) is given by 
	\[\pi_i^C=\frac{exp(g_i-\beta_ip_i)}{1+\sum_{j=1}^nexp(g_j-\beta_jp_j)}\]
	where $g_i$,$\beta_i$ and $p_i$ are defined as before. We put $\pi^C=(\pi_1^C,\dots,\pi_{n+1}^C)$.
\end{definition}

\begin{definition}
	For the MNL model with network effects defined in  \cite{du2016optimal}, the probability $\pi_i^D$ of choosing product  $i\in \{1,\dots, n\}$ is given by 
	\[\pi_i^D=\frac{exp(g_i-\beta_ip_i+\alpha_i\phi_i)}{1+\sum_{j=1}^nexp(g_j-\beta_jp_j+\alpha_j\phi_j)}\]
	where $\alpha_i$ is the network sensitivity of product $i$, and $\phi_i$ its market share. We put $\pi^D=(\pi_1^D,\dots,\pi_{n+1}^D)$.
\end{definition}
The following table summarises some of the comparisons we obtain when we consider the different probability models. The first column of the table contains the settings where we will be making the comparisons, the second column contains the conclusions given by our probability distribution $\pi=(\pi_1,\dots, \pi_{n+1})$, where $\pi_i=\dfrac{\phi_i^rexp(g_i-\beta_ip_i)}{\sum_{j\in[n+1]}\phi_j^rexp(g_j-\beta_jp_j)}$. The third and fourth columns contain the results when $ \pi^C, \pi^D$ are  used, respectively.

\begin{center}
	\begin{tabular}{ p{1.5cm} | p{4cm} | p{3.5cm} | p{4cm}  }
		Setting & $\pi$ & $\pi^C$ & $\pi^D$ \\ \hline 
		$\beta_i=\beta$  for all $i\in[n]$& Unique optimal price is to assign the same price to every product (uniform price).   & Uniform price. & No explicit form for the optimal price. \\  
		& Market shares are increasing on the intrinsic utility of the products. &Market shares are increasing on the intrinsic utility of the products. &  Since, in their model, the following expression  must be constant  $2\alpha\phi_i-\log(\phi_i)+g_i$, then if for some $i$, $\phi_i>\frac{1}{2\alpha}$,  an increment on its intrinsic utility, would lead to a decrement of its market share (Lemma 4.1 in \cite{du2016optimal}). \\ \hline
		$\beta_i=\beta$, $g_i=g$ for all $i\in[n]$&  Uniform price & Uniform price & Uniform price if $\alpha<\hat{\alpha}$, for some $\hat{\alpha}$. Otherwise, uniform price for $n-1$ products, and one product with a lower price (Theorem 3.2 in \cite{du2016optimal}.) \\
		& Uniform market share & Uniform market share & Uniform market share if $\alpha<\hat{\alpha}$, otherwise, the cheapest product has a larger market share (Theorem 3.1 in \cite{du2016optimal}).\\ \hline

		\hline
		
		\hline
	\end{tabular}
\end{center}

Finally, it is worth mentioning that in our model, even if the price sensitivities are different, according to Equation \eqref{GenMonPrice}, if $r\to 1$, then for all $i,k\in \{1,...,n\}$, $p_i^M=p_k^M$. However, in that case the highest intrinsic utility product gets a market share close to 1, while the rest of the products have a negligible share (a monopoly for the highest intrinsic utility product).

\section{SUBCASES: PRICE COMPETITION HOMOGENEOUS CASE}
\label{sec:homogen}
In this section we present a simplification of the general case of price competition, where every product presents the same intrinsic utility. This case will allow us to study, from a theoretical point of view, the behaviour of the prices as a function of the network parameter $r$.
We assume in this section that the values $g_i=g$ for all $i: 1\leq i\leq n$, and we define  for notational convenience $\hat{c}=e^{g/(1-r)}$ . The following corollary is a direct consequence of Theorem \ref{NE} for the case where all products have the same intrinsic utility. 

\begin{corollary}
	If all the products have the same intrinsic utility, $g_i=g$ for all $i\in N$, then the competitive price for the homogeneous case, $p^{CH}=(p_1^{CH},\dots,p_n^{CH})$ is the unique pure NE for the game  $\mathcal{G}$ , and its coordinates are given by
	\begin{equation}
	\label{homogPrice}
	p_i^{CH}=\frac{1-r}{\beta_i}\left[W(\frac{\hat{c}}{e+\hat{c}(n-1)e^{1-\frac{\beta_i p_i^{CH}}{1-r}}})+1\right],\quad \forall 1\leq i\leq n.
	\end{equation}
	
\end{corollary}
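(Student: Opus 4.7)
The plan is to derive this corollary directly from Theorem \ref{NE} together with a symmetry argument on the fixed-point system that defines the equilibrium. Uniqueness of the pure Nash Equilibrium is already guaranteed by Theorem \ref{NE}, so the only real content is verifying that, in the homogeneous intrinsic-utility case, the normalised equilibrium prices coincide across sellers, which is what lets the expression from Equation \eqref{CompPrice} collapse into the form \eqref{homogPrice}.

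First, I would recall from the proof of Theorem \ref{NE} that any pure NE $\mathbf{p}^C$ is characterised via the normalised prices $z_i := \beta_i p_i^C/(1-r)$ by the system
\begin{equation*}
z_i\Bigl(1+\sum_{j=1,\,j\neq i}^{n} c_j e^{-z_j}\Bigr)=1+\sum_{j=1}^{n} c_j e^{-z_j},\qquad i\in\{1,\dots,n\},
\end{equation*}
where $c_j=e^{g_j/(1-r)}$ and with the dummy term $c_{n+1}=1$, $z_{n+1}=0$ absorbed into the constant $1$ on each side. In the homogeneous case $g_j=g$ we have $c_j=\hat{c}$ for every $j\in\{1,\dots,n\}$, so the above becomes fully symmetric in the tuple $(z_1,\dots,z_n)$: permuting the indices sends a solution to another solution of the same system.

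Next, I would invoke the uniqueness half of Theorem \ref{NE}: the system admits exactly one solution $\mathbf{z}^C\in(1,\infty)^n$. Combining uniqueness with the permutation symmetry forces $z_1^{CH}=\cdots=z_n^{CH}=:z$, i.e. $\beta_i p_i^{CH}/(1-r)$ is the same across all sellers (but the physical prices $p_i^{CH}$ still scale as $1/\beta_i$). Plugging this back into Equation \eqref{CompPrice}, the sum in the denominator simplifies, since
\begin{equation*}
\sum_{j=1,\,j\neq i}^{n} e^{\frac{g_j-\beta_j p_j^{CH}}{1-r}+1}=(n-1)\,\hat{c}\,e^{1-\frac{\beta_i p_i^{CH}}{1-r}},
\end{equation*}
and substituting this into $p_i^C=\tfrac{1-r}{\beta_i}[W(\cdot)+1]$ yields precisely the implicit equation \eqref{homogPrice}. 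This also verifies that such a symmetric vector is indeed a solution, closing the argument.

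There is no genuine obstacle here: the proof is essentially a specialisation of Theorem \ref{NE}. The only step requiring care is justifying the symmetry collapse $z_i^{CH}=z_j^{CH}$, which rests entirely on invoking the uniqueness statement of Theorem \ref{NE} (without it, one could in principle imagine asymmetric equilibria even with homogeneous $g$). Once that is pointed out, substitution into \eqref{CompPrice} is purely mechanical.
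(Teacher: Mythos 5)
Your proof is correct and follows essentially the same route as the paper: specialise Theorem \ref{NE}, observe that the fixed-point system defining the normalised prices becomes symmetric when $g_i=g$, conclude $z_1^{CH}=\cdots=z_n^{CH}$, and substitute back into Equation \eqref{CompPrice}. If anything, your write-up is slightly more careful than the paper's, which asserts the collapse $z_i=z$ from symmetry alone, whereas you correctly point out that permutation symmetry must be combined with the uniqueness half of Theorem \ref{NE} to exclude asymmetric solutions.
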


\begin{proof}
	Thanks to Theorem \ref{NE}, we know that the coordinates of the unique NE for the price competition are given by Equation \eqref{CompPrice}. Now in the particular case where all the products have the same intrinsic utility, Equation \eqref{Lambert} gets reduced to
	\begin{equation*}
	z_i-1=W(\frac{\hat{c}}{e+\hat{c}\sum_{j=1,j\neq i}e^{-z_j+1}}),\quad \forall i: 1\leq i\leq n
	\end{equation*}
	which is completely symmetric for each $z_i$, therefore for all $1\leq i\leq n$, it must hold $z_i=z$ for some $z>1$. Consequently the previous Equation  is equivalent to
	\begin{align}
	\label{symm}z-1&=W(\frac{\hat{c}}{e+\hat{c}(n-1)e^{-z+1}})\\
	\nonumber\Rightarrow p_i&=\frac{1-r}{\beta_i}\left[W(\frac{\hat{c}}{e+\hat{c}(n-1)e^{1-\frac{\beta_i p_i}{1-r}}})+1\right].
	\end{align}
	
\end{proof}

\begin{Rem}
	Even when the solution for $z_i$ is given by a fixed value $z_i=z$ for all $1\leq i\leq n$, the prices $p_i$ can be different, due to the sensitivity parameter $\beta_i$. This phenomenon has also been studied in  \cite{ezra2017pricing} where the authors analyse the problem of pricing identical items, that eventually leads to different prices depending on consumption patterns.
\end{Rem}

The following Theorem states similar properties to Theorem \ref{MonopMonot} but now for the case of the competitive price. We are able to prove some monotonic behaviour of the normalised price, the products' market share, and the market share of the no purchase option when the competitive homogeneous price is used. However, similar properties seem to hold also for the general case (see Example \ref{MSCompEX}).

\begin{theorem}
	\label{CompHMonot}
	Under the assumption of homogeneity in the intrinsic utilities (i.e. $g_i=g$ for all $i\in \{1,...,n\}$), if we consider a network effect parameter  $r$, $0<r<1$, then the following statements hold true:
	\begin{enumerate}
		\item The normalised competition price $z^{CH}$ is increasing in terms of $r$.
		\item Every product has the same market share $\phi_i^*(p^{CH})=\phi^*(p^{CH})$ which is increasing in $r$.
		\item The market share for the no purchase option, $\phi_{n+1}^*(p^{CH})$,  is decreasing as a function of $r$.
	\end{enumerate}
\end{theorem}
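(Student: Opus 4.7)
The three claims are tightly linked, so I would reduce (2) and (3) to (1) via the structural identity that the normalised equilibrium price satisfies, and prove (1) by the implicit function theorem applied to a single scalar equation for $z=z^{CH}(r)$. Throughout I use the standing assumption $g>g_{n+1}=0$, i.e.\ $g>0$, which makes $\hat{c}(r)=e^{g/(1-r)}$ differentiable and strictly increasing in $r$ on $(0,1)$.

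\emph{Step 1: reduction to an implicit equation.} I would start from Equation \eqref{symm}, namely $z-1=W\bigl(\hat{c}/(e+\hat{c}(n-1)e^{1-z})\bigr)$, multiply both sides by $e^{z-1}$, and use the defining identity $W(x)e^{W(x)}=x$ to clear the Lambert function. A short algebraic rearrangement gives the equivalent scalar equation
\[
F(z,r)\;:=\;(z-1)e^{z}-\hat{c}(r)\bigl[1-(n-1)(z-1)\bigr]\;=\;0.
\]
Since $(z-1)e^{z}>0$ for the solution $z>1$ and $\hat{c}(r)>0$, this equation forces the a priori bound $1<z<\tfrac{n}{n-1}$, i.e.\ $1-(n-1)(z-1)>0$ at equilibrium. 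This bound is what makes the sign analysis below work.

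\emph{Step 2: monotonicity of $z^{CH}$ (part (1)) and its consequences (parts (2), (3)).} I would apply the implicit function theorem to $F(z,r)=0$. Differentiating gives $F_{z}=ze^{z}+(n-1)\hat{c}(r)>0$ and $F_{r}=-\hat{c}'(r)\bigl[1-(n-1)(z-1)\bigr]$. Under $g>0$, $\hat{c}'(r)=\tfrac{g}{(1-r)^{2}}\hat{c}(r)>0$, and by Step 1 the bracketed factor is strictly positive. Hence $F_{r}<0$ and
\[
\frac{dz^{CH}}{dr}\;=\;-\frac{F_{r}}{F_{z}}\;>\;0,
\]
which proves (1). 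For (2), the homogeneous specialisation of Equation \eqref{priceEq} reads $\beta_{i}p_{i}^{CH}=(1-r)/(1-\phi_{i}^{*}(\mathbf{p}^{CH}))$, which rewrites as $\phi_{i}^{*}(\mathbf{p}^{CH})=1-1/z^{CH}(r)=(z^{CH}-1)/z^{CH}$, independent of $i$. Because $x\mapsto (x-1)/x$ is strictly increasing on $(1,\infty)$, part (1) immediately yields that this common market share $\phi^{*}(\mathbf{p}^{CH})$ is strictly increasing in $r$. For (3), summing over products gives $\phi_{n+1}^{*}(\mathbf{p}^{CH})=1-n\phi^{*}(\mathbf{p}^{CH})$, which is strictly decreasing in $r$ by part (2).

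\emph{Main obstacle.} The only delicate point is the sign verification of the bracket $1-(n-1)(z-1)$ appearing in $F_{r}$: without the a priori bound $z<n/(n-1)$ the implicit differentiation would be inconclusive (indeed the sign of $F_{r}$ could flip). Fortunately this bound drops out directly from the structural form of $F(z,r)=0$, because $(z-1)e^{z}>0$ forces the right-hand side to be positive as well. Once this single non-routine observation is in place, the rest of the argument is an elementary application of the implicit function theorem followed by trivial monotonicity of $x\mapsto (x-1)/x$ and $x\mapsto 1-nx$.
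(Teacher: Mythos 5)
Your proof is correct. For part (1) it is essentially the paper's own argument in different clothing: the paper also clears the Lambert function from Equation \eqref{symm} to get $(z-1)e^{z}+z\hat{c}(n-1)=n\hat{c}$ (algebraically identical to your $F(z,r)=0$), differentiates implicitly in $r$, and closes the sign analysis with exactly the bound $z^{CH}<\tfrac{n}{n-1}$, which it extracts from $z-1=\hat{c}/(e^{z}+\hat{c}(n-1))<\tfrac{1}{n-1}$ rather than from your positivity observation --- same bound, same role. Where you genuinely diverge is part (2): the paper writes $\phi^{*}(p^{CH})=\hat{c}/(e^{z^{CH}}+n\hat{c})$ and differentiates this quotient directly in $r$, producing a somewhat involved expression whose positivity must then be checked; you instead invoke the first-order condition \eqref{priceEq} to get $\phi^{*}=(z^{CH}-1)/z^{CH}$ (an identity the paper itself uses in the proof of Theorem \ref{NE}) and let the monotonicity of $x\mapsto 1-1/x$ carry the conclusion from part (1). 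Your route is shorter, avoids the second differentiation entirely, and makes the logical dependence of (2) on (1) explicit; the paper's direct computation is self-contained but harder to verify. Part (3) is identical in both.
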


\begin{proof}
	\begin{enumerate}
		\item We prove first that our normalised price $z$  is increasing in terms of $r$. Indeed, we notice that by definition of Lambert $W$ function, Equation \eqref{symm} is equivalent to 
		\[(z^{CH}-1)e^{z^{CH}}+z^{CH}\hat{c}(n-1)=n\hat{c}.\]
		
		Taking the derivative with respect to $r$ in both sides of the Equation, we find the following:
		\begin{align}
		\nonumber z^{CH}\frac{\partial z^{CH}}{\partial r}e^{z^{CH}}+\frac{\partial z^{CH}}{\partial r}\hat{c}(n-1)+z^{CH}(n-1)\frac{\partial \hat{c}}{\partial r}&=n\frac{\partial \hat{c}}{\partial r}\\
		\label{decreHomo}\frac{\partial z^{CH}}{\partial r}[z^{CH}e^{z^{CH}}+(n-1)\hat{c}]&=\frac{\hat{c}g[n-(n-1)z^{CH}]}{(1-r)^2}
		\end{align}
		On the other hand, according to Equation \eqref{symm} we see that
		\[z^{CH}-1=\frac{\hat{c}}{e^{z^{CH}}+\hat{c}(n-1)}<\frac{\hat{c}}{\hat{c}(n-1)}=\frac{1}{n-1},\]
		and then $z^{CH}<\dfrac{n}{n-1}$. Using this into Equation \eqref{decreHomo} we obtain that $\dfrac{\partial z^{CH}}{\partial r}>0$, where $z$ is a increasing function of $r$.

		\item We notice that each market share in the equilibrium is given by $\phi_i^*(p^{CH})=\dfrac{\hat{c}e^{-z_i^{CH}}}{1+\hat{c}ne^{-z_i^{CH}}}=\dfrac{\hat{c}}{e^{z^{CH}}+\hat{c}n}:=\phi^*(p^{CH})$ which is independent of $i$, since all the normalised prices $z_i$ are the same. Taking the derivative of $x$ with respect to r give us the following equalities: 
		\begin{align*}
		\frac{\partial \phi^*( p^{CH})}{\partial r}&=\frac{\partial }{\partial r}\left[\frac{\hat{c}}{e^{z^{CH}}+ n\hat{c}}\right]\\
		&=\frac{\hat{c} \phi^*( p^{CH})}{(1-r)^2}\left[\frac{z^{CH}e^{2z^{CH}}+(z^{CH}(n-1)-1)\hat{c}e^{z^{CH}}}{(z^{CH}e^{z^{CH}}+(n-1)\hat{c})(e^{z^{CH}}+ n\hat{c})}\right]>0.
		\end{align*}
		Hence all the market share in the equilibrium are increasing in terms of the network parameter $r$.

		\item Since $\phi_{n+1}^*( p^{CH})=1-\sum_{i=1}^n\phi_i^*( p^{CH})=1-n\phi^*( p^{CH})$, and $\phi^*(\cdot)$ is increasing in $r$, necessarily $\phi_{n+1}^*( p^{CH})$ must be decreasing.
		
		
	\end{enumerate}
	
\end{proof}

\begin{Rem}
	Numerical simulations have shown us that similar conclusions from Theorem \ref{CompHMonot} in the points $1.$ and $3.$ (normalised price increasing and market share for no purchase option decreasing) seem to hold for the general competition case. However, we only have been able to observe it empirically (see for example Figures \ref{fig:F1} to \ref{fig:F5} in the Appendix \ref{sec:exper}). 
\end{Rem}
The following example shows  that Theorem \ref{CompHMonot} part (2). does not necessarily hold  when the intrinsic utilities are different, where there are some products whose market shares decrease in terms of $r$. We also can observe that the market share for the highest intrinsic utility products seems to be increasing.

\begin{example}
	\label{MSCompEX}
	Consider a set of network parameters given by $r\in\{0.2, 0.4, 0.6, 0.8 \}$,  intrinsic utilities given by $(g_1,g_2,g_3)=(0.993,  0.480,  0.159)$, and a price sensitivity $\beta_i=\beta=0.1$, the market share for each product and their respective expected revenue for each $r$ are given in the following table.
	\begin{center}
		\begin{tabular}{ l | c | c | c | c | c | c}
			$r$ & $\phi_1(p)$ & $\phi_2(p)$ &  $\phi_3(p)$ & $w_1(p)$ & $w_2(p)$ &  $w_3(p)$  \\ \hline 
			0.2 & 0.302 & 0.193 & 0.134 & 3.461 & 1.912 & 1.298 \\ 
			0.4 & 0.352 & 0.201 & 0.130 & 3.269 & 1.509 & 0.900 \\ 
			0.6 & 0.448 & 0.213 & 0.111 & 3.243 & 1.082 & 0.498 \\ 
			0.8 & 0.644 & 0.225 & 0.057 & 3.613 & 0.581 & 0.121 \\ 
			\hline
		\end{tabular}
	\end{center}
\end{example}

\section{PROOFS}
\label{sec:proofs}

\begin{proof}[Proof of Lemma \ref{lem:rma}]
	Following the idea of \cite{maldonado2018}, consider that in each time step $k$ (arrival of $k$-th consumer) either a product $i\in\{1,...,n\}$ is purchased, or no product is purchased ($i=n+1$), then, defining $D^k:=\sum_{j\in [n+1]}d_j^k=\sum_{j\in [n+1]}\sum_{t=1}^kd_j^t=k$,  we have that $\phi^k=D^k\dfrac{\phi^k}{D^k}\Rightarrow \phi^{k+1}=\frac{D^k\phi^k+e^{k+1}}{D^{k+1}},$ with $e^{k+1}$ a random ($n+1$- dimensional) variable with coordinates $(e^{k+1})_i=1$ if product $i\in \{1,...,n\}$ has been purchased at time $k+1$,  $(e^{k+1})_{j\neq i}=0$; and $(e^{k+1})_{n+1}=1$ if no product is purchased by the consumer $k+1$. 
	
	Hence, clearly  $\EE[e^{k+1}|\phi^t,t\leq k]=\pi^k$, and  considering $\gamma^{k+1}:=\dfrac{1}{D^{k+1}}=\dfrac{1}{k+1}$, and $U^{k+1}:=e^{k+1}-\EE[e^{k+1}|\phi^t,t\leq k]$, we get the desired recurrence 
	\begin{equation*}
	\phi^{k+1}=\phi^k+\gamma^{k+1}\left[\pi^k-\phi^k+U^{k+1}\right].
	\end{equation*}

\end{proof}

\begin{proof}[Proof of Theorem \ref{MonopMonot} parts (1) and (2)]
	Let $p^{M}(r)$ be the monopolistic price for each product, given by Theorem \ref{MonoPrice}, and consider the normalised price $z^{M}(r)=\frac{\beta p^{M}(r)}{1-r}$.
	\begin{enumerate}
		\item  We know that according to Equation \eqref{priceandMS}, the market share for the no purchase option, $\phi_{n+1}^*(\mathbf{p}^M(r))$, must satisfy $z^{M}(r) \phi_{n+1}^*( \mathbf{p}^{M}(r))=1$. Clearly since $W()$ is an increasing function, Equation \eqref{normPrice} implies that  $z^{M}(r)$ is strictly increasing in terms of $r$, hence  $\phi_{n+1}^*( \mathbf{p}^{M})(r)$ must be strictly decreasing as a function of $r$.
		
		\item We first compute the derivative of $z^M(r)$ with respect to $r$, indeed we use Equation \eqref{monoSol} to obtain the $\frac{\partial z^{M}(r)}{\partial r}$ as follows
		\begin{align*}
		\frac{\partial z^{M}(r)}{\partial r}&=-\frac{\partial z^{M}(r)}{\partial r}e^{-z^{M}(r)}\sum_{i=1}^ne^{g_i/(1-r)}+e^{-z^{M}(r)}\sum_{i=1}^n\frac{g_ie^{g_i/(1-r)}}{(1-r)^2}\\
		\Rightarrow \frac{\partial z^{M}(r)}{\partial r}&=\frac{1}{(1-r)^2}\frac{\sum_{i=1}^ne^{-z^{M}(r)}e^{g_i/(1-r)}g_i}{1+e^{-z^{M}(r)}\sum_{i=1}^ne^{g_i/(1-r)}}=\frac{1}{(1-r)^2}\sum_{i=1}^n\phi_i^*(\mathbf{p}^{M}(r))g_i.
		\end{align*}

		Now we consider the market share for the highest intrinsic utility product, $\phi_1^*(\mathbf{p}^M(r))$, and we take its first derivative with respect to $r$: 
		\begin{align*}
		\frac{\partial \phi_1^*( \mathbf{p}^{M}(r))}{\partial r}&=\frac{\frac{\partial e^{g_1/(1-r)}}{\partial r}e^{-z^{M}(r)}+\frac{\partial z^{M}(r)}{\partial r}e^{-z^{M}(r)}e^{g_1/(1-r)}}{1+e^{-z^{M}(r)}\sum_{i=1}^ne^{g_i/(1-r)}}-\\
		&\frac{e^{-z^{M}(r)}e^{g_1/(1-r)}}{(1+e^{-z^{M}(r)}\sum_{i=1}^ne^{g_i/(1-r)})^2}\left(e^{-z^{M}(r)}\sum_{i=1}^n\frac{\partial e^{g_i/(1-r)}}{\partial r}-\frac{\partial z^{M}(r)}{\partial r}e^{-z^{M}(r)}\sum_{i=1}^ne^{g_i/(1-r)}\right)\\
		&=\frac{ \phi_1^*( p^{M}(r))}{(1-r)^{2}}\left[g_1-\sum_{j=1}^n\phi_j^*(\mathbf{p}^{M}(r))g_j     +[1-\sum_{j=1}^n\phi_j^*(\mathbf{p}^{M}(r))]\sum_{j=1}^n\phi_j^*(\mathbf{p}^{M}(r))g_j\right],
		\end{align*}
		the only term that can be negative in the last equality is $g_1-\sum_{j=1}^n\phi_j^*(\mathbf{p}^{M})g_j $, but as $\sum_{j\in [n+1]}\phi_j^*=1$, then $g_1=\sum_{j\in [n+1]}\phi_j^*g_1=\sum_{j=1}^n\phi_j^*g_1+\phi_{n+1}^*g_1$, and  since $g_1\geq g_j$, for all $1\leq j\leq n$ we have \[g_1-\sum_{j=1}^n\phi_j^*(\mathbf{p}^{M})g_j =\sum_{j=1}^n(g_1-g_j)\phi_j^*(\mathbf{p}^{M})+\phi_{n+1}^*(\mathbf{p}^{M})g_1>0.\]
		In conclusion $\phi_1^*(p^M)$, the market share for the highest intrinsic utility product is strictly increasing in terms of $r$.

	\end{enumerate}

\end{proof}

\section{ADDITIONAL EXPERIMENTS}

\label{sec:exper}
We present here some extra experimental results depicting the different behaviour of both pricing schemes (competitive price against monopolistic price). We use the following parameters:
$g=(g_1,....,g_5)=(0.850, 0.733, 0.416, 0.256, 0.139 )$, $\beta_i=\beta=0.1$.

\begin{figure}[ht]
	\vspace{-3.5mm}
	\begin{centering}
		\includegraphics[scale=0.72]{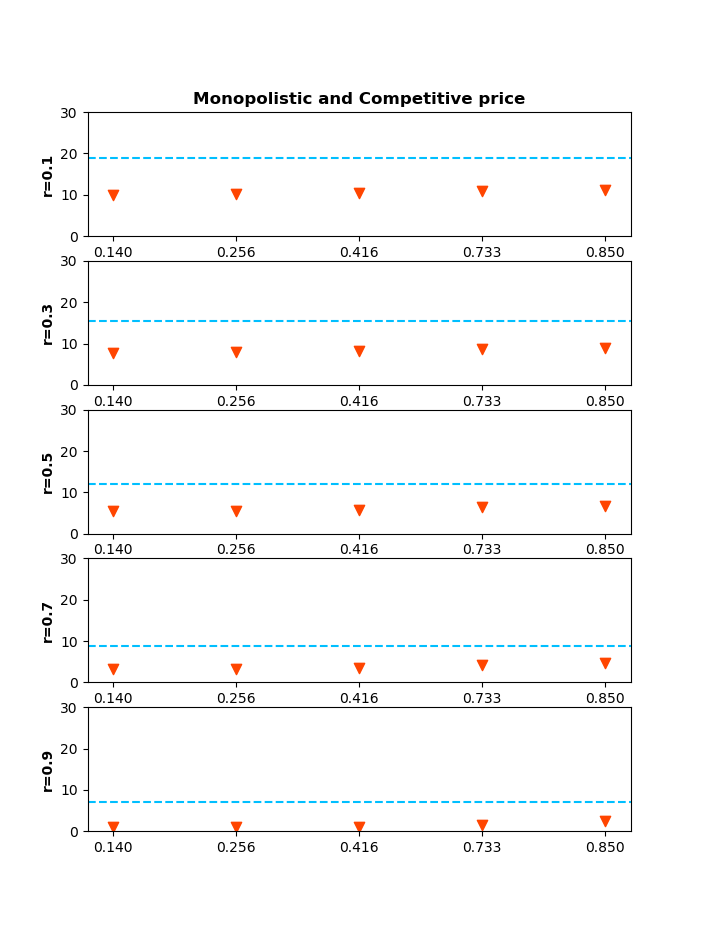}
	\end{centering}
	\centering{}
	\vspace{-3.5mm}
	\caption{\small  Comparison of prices, competition versus monopoly respect to the products'  intrinsic utility (X axis). The red triangles are the competitive prices (NE) for each product, and the blue dotted line is the monopolistic price  }
	\label{fig:F1}
	\vspace{-2.5mm}
\end{figure}

In Fig. \ref{fig:F1} we observe how the prices (both $p^C, P^M$) decrease as a function of $r$. And that in the competitive case the prices seem to be increasing in the value of their intrinsic utilities ($g_i$).

\begin{figure}[ht]
	\includegraphics[scale=0.5]{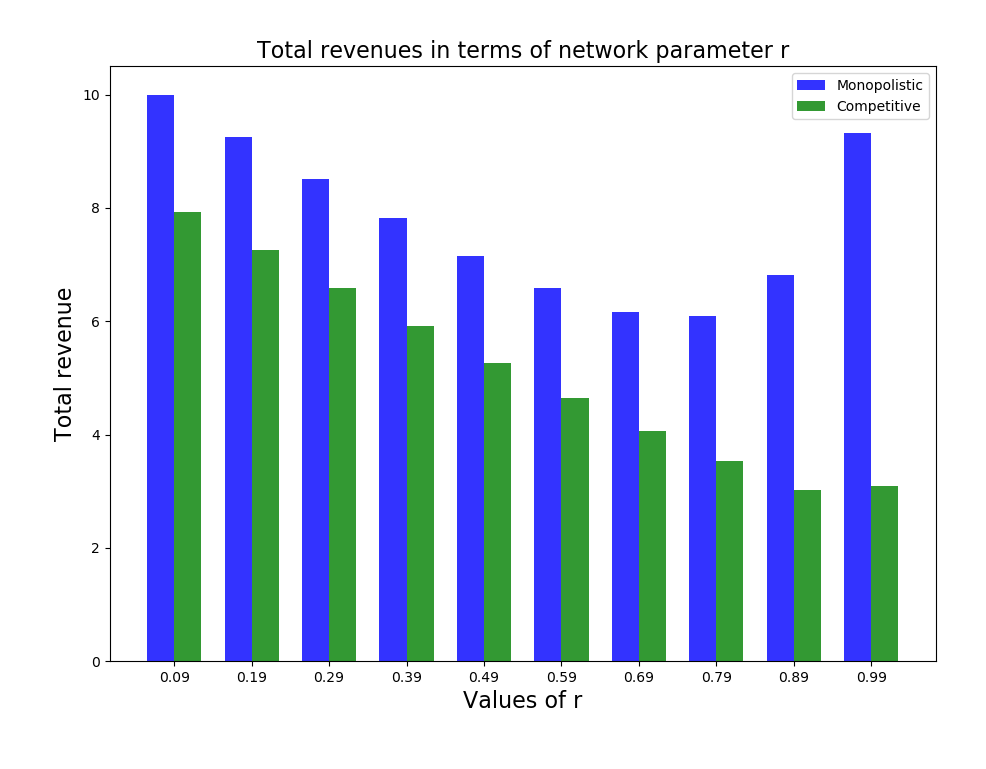}
	\centering{}
	\vspace{-3.5mm}
	\caption{\small Comparison of total revenue perceived by the sellers: competition versus monopoly}
	\label{fig:F3}
	\vspace{-2.5mm}
\end{figure}

Fig. \ref{fig:F3} shows how the total revenue $R(p)$ varies for different values of $r$ (in both, competitive and monopolistic cases). Clearly $R(p^M)>R(p^C)$ for each value of $r$. And $R(p^M)$ has a change of its monotony after some value $r^*>0.7$.

\begin{figure}[ht]
	\vspace{-3.5mm}
	\begin{centering}
		\includegraphics[scale=0.5]{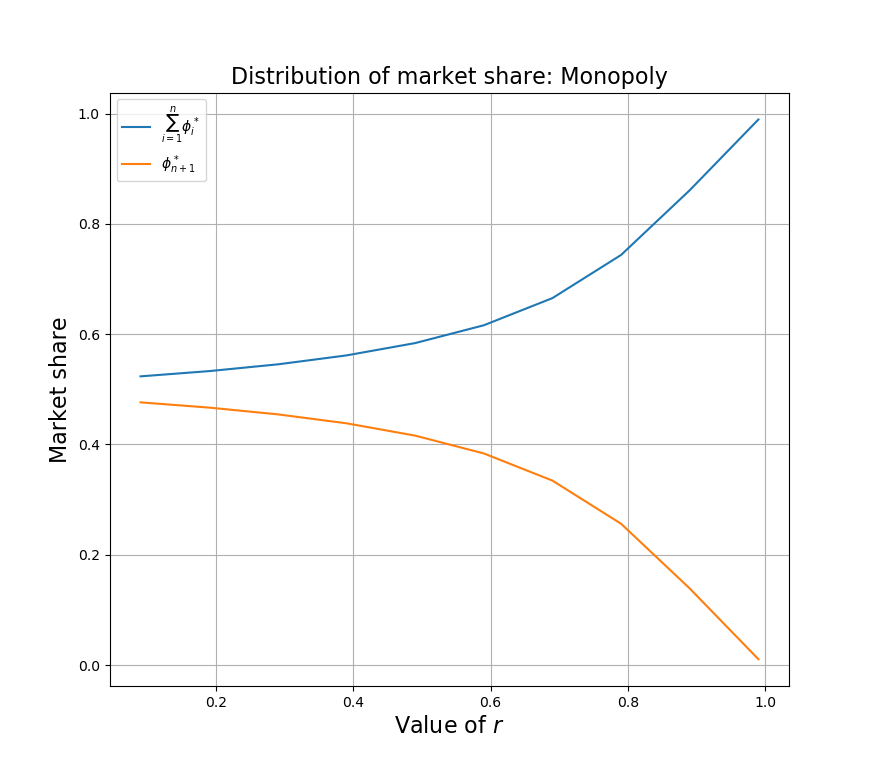}
	\end{centering}
	\centering{}
	\vspace{-3.5mm}
	\caption{\small Comparison of total market shares assigned in the equilibrium for different values of $r$, and the respective market share for the no purchase option, when the monopolistic price is used.}
	\label{fig:F4}
	\vspace{-2.5mm}
\end{figure}

Fig. \ref{fig:F4} depicts the behaviour, for  different values of $r$, of the sum of the market shares for the available products, in contrast to the behaviour of the no purchase option, when the monopolistic price, $p^M$, is used. We can clearly see that $\phi_{n+1}^*$ decreases to zero in terms of $r$, while $\sum_{i=1}^n\phi_i^*$ increases.

\begin{figure}[ht]
	\vspace{-3.5mm}
	\begin{centering}
		\includegraphics[scale=0.5]{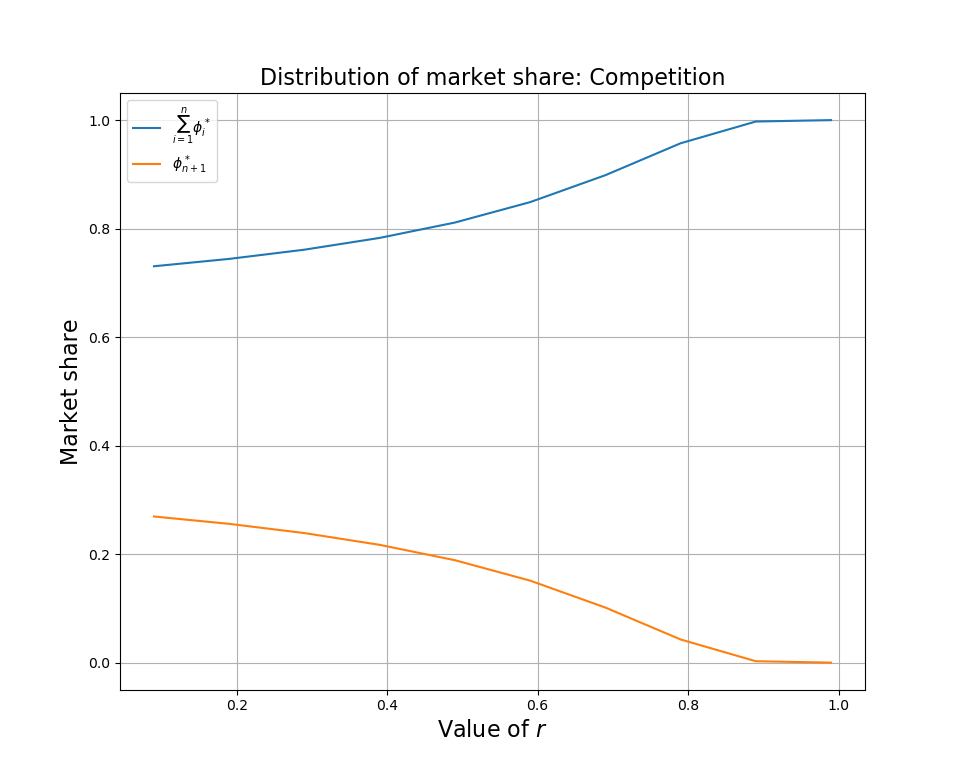}
	\end{centering}
	\centering{}
	\vspace{-3.5mm}
	\caption{\small Comparison of total market shares assigned in the equilibrium for different values of $r$, and the respective market share for the no purchase option, when the competitive price is used.}
	\label{fig:F5}
	\vspace{-2.5mm}
\end{figure}

\end{document}